\documentclass[aip,graphicx,jmp]{revtex4-1}
\usepackage{amsmath,amssymb,amsthm}
\usepackage[T2A]{fontenc}

\draft 

\newtheorem{corollary}{Corollary}

\newtheorem{theorem}{Theorem}

\newtheorem{lemma}{Lemma}

\begin{document}

\title{A new (1+1)-dimensional matrix k-constrained KP
hierarchy} 

\author{O.I. Chvartatskyi}
\email{alex.chvartatskyy@gmail.com}
\author{Yu.M. Sydorenko}
\email{y_sydorenko@franko.lviv.ua}
\affiliation{Ivan Franko National University of L'viv}

\date{\today}

\begin{abstract}
We introduce a new generalization of matrix (1+1)-dimensional
k-constrained KP hierarchy. The new hierarchy contains matrix
generalizations of stationary DS systems, (2+1)-dimensional modified
Korteweg-de Vries equation and the Nizhnik equation. A binary
Darboux transformation method is proposed for integration of systems
from this hierarchy.

\end{abstract}

\pacs{05.45 Yv, 05.45-a, 02.30 Jr, 02.10 Ud}

\maketitle 
\section{Introduction}
Algebraic methods are of great importance in a soliton equations theory (see \cite{LDA,Zakharov,Zakh-Manak,solitons,March,Matveev79,Matveev}). 
 In particular, they make possible to omit analytical
difficulties that arise in the investigation of corresponding direct
and inverse scattering problems for nonlinear equations. One of the
important objects arising from algebraic approach in the soliton
theory is scalar and matrix hierarchies for nonlinear integrable
systems of Kadomtsev-Petviashvili type (KP hierarchy)
\cite{DJKM1,DJKM2,SS3,MM4,Ohta}.

The KP hierarchy and its generalizations 
play an important role in mathematical physics. 
One of such generalizations is the so-called ``KP equation with
self-consistent sources'' (KPSCS), discovered by Melnikov
\cite{M1,Mf,M4,M2,M3}. In \cite{SS,KSS,Chenga1,CY,Chenga2},
k-symmetry constraints of the KP hierarchy (k-cKP hierarchy)  which
have connections with KPSCS were investigated. 
k-cKP hierarchy contains physically relevant systems like the
nonlinear Schr\"odinger equation, the Yajima-Oikawa system, a
generalization of the Boussinesq equation, and the Melnikov system.
Multicomponent k-constraints of the KP hierarchy were introduced in
\cite{SSq} and investigated in
\cite{Oevel93,ZC,Oevel96,Aratyn97,Chau1,WLG}.

 The modified
k-constrained KP (k-cmKP) hierarchy was proposed in
\cite{CY,KSO,OC}. It contains, for example, the vector Chen-Lee-Liu
and the modified KdV (mKdV) equation. Multi-component versions of
the Kundu-Eckhaus and Gerdjikov-Ivanov equations were also obtained
in \cite{KSO}, via gauge transformations of the k-cKP, respectively
the k-cmKP hierarchy. Moreover, in \cite{MSS,6SSS},
(2+1)-dimensional extensions of the k-cKP hierarchy were introduced
and dressing methods via differential transformations were
investigated. Some systems of this hierarchy were investigated via
binary Darboux transformations in \cite{PHD,BS1}. This hierarchy was
also rediscovered recently in \cite{LZL1,LZL2}.

In this paper our aim was to generalize (1+1)-dimensional matrix
k-constrained KP hierarchy to the case of two integro-differential
operators. As a result we obtain a new bidirectional (one of the
operators in obtained hierarchy depends on two indices $k$ and $l$)
generalization of (1+1)-dimensional matrix k-constrained KP
hierarchy that we will call (1+1)-BDk-cKP hierarchy (see formulae
(\ref{ex2+1})).

This work is organized as follows. In Section 2 we
 present a short survey of results on constraints for KP
 hierarchies.
 In Section 3 we
 introduce a new (1+1)-BDk-cKP hierarchy. Members of the
 obtained hierarchy are also listed there.
  (1+1)-BDk-cKP hierarchy contains matrix generalizations of
  stationary Davey-Stewartson hierarchy, new stationary Yajima-Oikawa and Melnikov hierarchies.
  In Section 4 we consider
 dressing via binary Darboux transformation for the (1+1)-BDk-cKP hierarchy.
 As an example we consider construction of solutions for the
 matrix generalization of stationary DS system that was considered
 in Section 3.
 In the final section,
Conclusions we discuss the obtained results and mention problems for
further investigations. We also present an equation obtained from
(1+1)-BDk-cKP hierarchy that generalize vector nonlinear
Schr\"odinger system (the Manakov system).

\section{$k$-constrained KP hierarchy and its extensions}\label{kckp}

To make this paper self-contained, we briefly introduce the KP
hierarchy \cite{LDA} and its multi-component k-symmetry constraints
(k-cKP hierarchy). A Lax representation of the KP hierarchy is given
by
\begin{equation}\label{ssk}
L_{t_n}=[B_n,L],\,    \qquad n\geq1,
\end{equation}
where $L=D+U_1D^{-1}+U_2D^{-2}+\ldots$ is a scalar
pseudodifferential operator, $t_1:=x$, $D:=\frac{\partial}{\partial
x}$, and $B_n:= (L^n)_+
 := (L^n)_{\geq0}=D^n+\sum_{i=0}^{n-2}u_iD^i$ is
the differential operator part of $L^n$. The consistency condition
(zero-curvature equations), arising from the commutativity of flows
(\ref{ssk}), is
are
\begin{equation}\label{LP}
B_{n,t_k}-B_{k,t_n}+[B_n,B_k]=0.
\end{equation}

Let $B^{\tau}_n$ denote the formal transpose of $B_n$, i.e.
$B^{\tau}_n:=(-1)^nD^n+\sum_{i=0}^{n-2}(-1)^iD^iu^{\top}_i$, where
$^{\top}$ denotes the matrix transpose. We will use curly brackets
to denote the action of an operator on a function whereas, for
example, $B_n \, q$ means the composition of the operator $B_n$ and
the operator of multiplication by the function $q$. The following
formula holds for $B_nq$ and $B_n\{{q}\}$:
$B_n\{{q}\}=B_nq-(B_n{q})_{>0}.$ In the case $k=2$, $n=3$ formula
(\ref{LP}) presents a Lax pair for the Kadomtsev-Petviahvili
equation \cite{KP}. Its Lax pair was obtained in \cite{D} (see also
\cite{Zakharov}).

The multicomponent k-constraints of the KP hierarchy is given by \cite{SSq} 
\begin{equation}\label{eq1}
  L_{t_n}=[B_n,L],
  \end{equation}
with the k-symmetry reduction
\begin{equation}\label{eq2}
    L_k:=L^k=B_k+\sum_{i=1}^m\sum_{j=1}^mq_im_{ij}D^{-1}r_j=B_k+{\bf
q}{\cal M}_0D^{-1}{\bf r}^{\top},
\end{equation}
where ${\bf q}=(q_1,\ldots,q_m)$ and ${\bf r}=(r_1,\ldots,r_m)$ are
vector functions, ${\cal M}_0=(m_{ij})_{i,j=1}^m$ is a constant
$m\times m$ matrix. In the scalar case ($m=1$) we obtain
k-constrained KP hierarchy \cite{SS,KSS,Chenga1,CY,Chenga2}. The
hierarchy given by (\ref{eq1})-(\ref{eq2}) admits the Lax
representation (here $k\in{\mathbf{N}}$ is fixed):
\begin{equation}\label{hier}
  [L_k,M_n]=0,\,\,\, L_k=B_k+{\bf q}{\cal M}_0D^{-1}{\bf r}^{\top}, \quad
  M_n=\partial_{t_n}-B_n.
\end{equation}
Lax equation (\ref{hier}) is equivalent to the following system:
\begin{equation}\label{hier23}
[L_k,M_n]_{\geq0}=0,\,\, M_n\{{\bf q}\}=0,\,\,\,M_n^{\tau}\{{{\bf
r}}\}=0.
\end{equation}

Below we will also use the formal adjoint
$B^*_n:=\bar{B}^{\tau}_n=(-1)^nD^n+\sum_{i=0}^{n-2}(-1)^iD^i{u}^*_i$
of $B_n$, where $^\ast$ denotes the Hermitian conjugation (complex
conjugation and transpose).

 For $k=1$, the hierarchy given by (\ref{hier23}) is a multi-component generalization of the AKNS
hierarchy. For $k=2$ and $k=3$, one obtains vector generalizations
of the Yajima-Oikawa and Melnikov \cite{Mf,M4} hierarchies,
respectively.

In \cite{CY,KSO,OC}, a k-constrained modified KP (k-cmKP) hierarchy
was introduced and investigated. 
It contains vector generalizations of the Chen-Lee-Liu, the modified
multi-component Yajima-Oikawa and Melnikov hierarchies.

An essential extension of the k-cKP hierarchy is its
(2+1)-dimensional generalization introduced in \cite{MSS,6SSS} and rediscovered in \cite{LZL1,LZL2}.
\begin{center}
\section{A new bidirectional extension of (1+1)-dimensional k-constrained
KP ((1+1)-BDk-cKP) hierarchy}\label{extended}
\end{center}
In this section we introduce a new generalization of the
(1+1)-dimensional k-constrained KP hierarchy given by (\ref{hier})
to the case of two integro-differential operators. One of them (the
operator $L_{k,l}$ (\ref{ex2+1})  generalizes the corresponding
operator $L_k$ (\ref{hier}) and depends on two independent indices
$l$ and $k$. It leads to generalization of (1+1)-dimensional k-cKP
hierarchy (\ref{hier}) in additional direction $l$ ($l=1,2,\ldots$).
For further purposes we will use the following well-known formulae
for integral operator $h_1D^{-1}h_2$ constructed by matrix-valued
functions $h_1$ and $h_2$ and the differential operator $A$ with
matrix-valued coefficients in the algebra of pseudodifferential
operators: 
\begin{equation}\label{Sydorenko:eq25}
Ah_1{\cal D}^{-1} h_2=(Ah_1{\cal D}^{-1} h_2)_{\geq0}+ A\{h_1\}{\cal
D}^{-1} h_2,
\end{equation}
\begin{equation}\label{Sydorenko:eq26}
h_1{\cal D}^{-1} h_2 A=(h_1{\cal D}^{-1} h_2 A)_{\geq0}+ h_1{\cal
D}^{-1} [A^\tau \{h_2^{\top}\}]^\top,
\end{equation}
\begin{equation}\label{Sydorenko:eq27}
h_1{\cal D}^{-1} h_2 h_3{\cal D}^{-1} h_4=h_1D^{-1}\{h_2h_3\}{\cal
D}^{-1} h_4 -h_1{\cal D}^{-1}D^{-1}\{h_2h_3\} h_4.
\end{equation}
It is known that the Matrix KP hierarchy can be formulated by a
pseudodifferential operator:
\begin{equation}
W=I+w_1D+w_2D^2+\ldots
\end{equation}
with $N\times N$-matrix-valued coefficients $w_i$. Consider the
differential operators $\mathcal{J}_kD^k$ and
$\alpha_n\partial_{t_n}-{\tilde{\mathcal{J}}}_nD^n$,
$\alpha_n\in{\mathbb{C}}$, $n,k\in{\mathbb{N}}$, where
$\mathcal{J}_k$ and ${\tilde{\mathcal{J}}}_n$ are $N\times N$
commuting matrices (i.e.,
$[{\tilde{\mathcal{J}}}_n,\mathcal{J}_k]=0$). It is evident that the
dressed operators:
\begin{equation}
L_k:=W\mathcal{J}_kD^kW^{-1}=\mathcal{J}_kD^k+u_{k-1}D^{k-1}+\ldots
+u_0+u_{-1}D^{-1}+\ldots,\,\,
\end{equation}
and
\begin{equation}
M_n:=W(\alpha_n\partial_{t_n}-{\tilde{\mathcal{J}}}_nD^n)W^{-1}=\alpha_n\partial_{t_n}-{\tilde{\mathcal{J}}}_nD^n-v_{n-1}D^{n-1}+\ldots
+v_0+v_{-1}D^{-1}+\ldots,\,\,
\end{equation}
with $N\times N$-matrix coefficients $u_i$ and $v_j$ commute:
$[L_k,M_n]=0$. We shall impose the following reduction on operators
$L_k$ and $M_n$:
\begin{equation}
(L_k)_{<0}:=(L_{k,l})_{<0}=\!c_l\!\sum_{j=0}^l\!{\bf q}[j]\!{\cal
M}_0D^{-1}{\bf r}^{\top}[l-j],\,\,(M_{n})_{<0}=-\gamma{\bf q}{\cal
M}_0D^{-1}{\bf r}^{\top},\,\,\,\gamma,\,\ c_l\in{\mathbb{C}},
\end{equation}
where ${\bf q}$ and ${\bf r}$  are $N\times m$ matrix functions;
${\bf q}[j]$ and ${\bf r}[j]$ are matrix functions of the following
form: $ {\bf{q}}[j]:=(M_{n})^j\{{\bf q}\},\,\, {\bf
r}^{\top}[j]:=((M_n^{\tau})^j\{{\bf r}\})^{\top}.$

 As a result we obtain the following bi-directional  k-cKP
hierarchy:
\begin{equation}\label{ex2+1}
\begin{array}{l}
\!L_{k,l}\!=\!B_k\!+\!c_l\!\sum_{j=0}^l\!{\bf q}[j]\!{\cal
M}_0D^{-1}{\bf
r}^{\top}[l-j],\,\!\!B_k=\mathcal{J}_kD^k+\sum_{j=0}^{k-1}u_jD^j,\,u_j=u_j(x,t_n),\,\!l=0,\ldots
\\
M_{n}=\alpha_n\partial_{t_n}-{A}_n-\gamma{\bf q}{\cal M}_0D^{-1}{\bf
r}^{\top},\,\,\,
{A}_n={\tilde{\mathcal{J}}}_nD^n+\sum_{i=0}^{n-1}v_iD^i,\,v_i=v_i(x,t_n),
\alpha_n\in{\mathbb{C}},
\end{array}
\end{equation}
where $u_j$ and $v_i$, are $N\times N$ matrix functions.

The following theorem  holds.
\begin{theorem}\label{T1}
The Lax equation $[L_{k,l},M_{n}]=0$ is equivalent to the system:
 \begin{equation}\label{fre}
 [L_{k,l},M_{n}]_{\geq0}=0,\gamma L_{k,l}\{{\bf q}\}+c_l(M_n)^{l+1}\{{\bf q}\}=0,\,\gamma L_{k,l}^{\tau}\{{\bf{r}}\}+c_l(M_n^{\tau})^{l+1}\{{\bf{r}}\}=0.
 \end{equation}
 \end{theorem}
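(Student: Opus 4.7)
The plan is to expand $[L_{k,l},M_n]$ by writing $L_{k,l}=B_k+Q$ with $Q:=c_l\sum_{j=0}^l\mathbf{q}[j]\mathcal{M}_0 D^{-1}\mathbf{r}^\top[l-j]$, and $M_n=\bar A-\gamma R$ with $\bar A:=\alpha_n\partial_{t_n}-A_n$ and $R:=\mathbf{q}\mathcal{M}_0 D^{-1}\mathbf{r}^\top$. Then
\[
[L_{k,l},M_n]=[B_k,\bar A]-\gamma[B_k,R]+[Q,\bar A]-\gamma[Q,R],
\]
and taking the differential $(\geq 0)$ part immediately yields the first equation of~(\ref{fre}). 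It therefore remains to show that $[L_{k,l},M_n]_{<0}=0$ is equivalent to the second and third equations of~(\ref{fre}).

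I would first apply formulas~(\ref{Sydorenko:eq25})--(\ref{Sydorenko:eq26}) to compute $[B_k,R]_{<0}$ and the non-interaction pieces of $[Q,\bar A]_{<0}$, and~(\ref{Sydorenko:eq27}) for $[Q,R]_{<0}$. The commutator with $\bar A$ produces $t_n$-derivatives of $\mathbf{q}[j]$ and $\mathbf{r}^\top[l-j]$ together with $A_n$-actions on them. By the defining identities $\mathbf{q}[j+1]=M_n\{\mathbf{q}[j]\}$ and $\mathbf{r}[j+1]=M_n^\tau\{\mathbf{r}[j]\}$, the combination $\alpha_n\mathbf{q}[j]_{t_n}-A_n\{\mathbf{q}[j]\}$ equals $\mathbf{q}[j+1]+\gamma\mathbf{q}\mathcal{M}_0 D^{-1}\{\mathbf{r}^\top\mathbf{q}[j]\}$, and a symmetric identity holds on the $\mathbf{r}$-side.

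Substituting these recurrences produces two families of terms: (a) index-shifted single integrals $\mathbf{q}[j+1]\mathcal{M}_0 D^{-1}\mathbf{r}^\top[l-j]$ and $\mathbf{q}[j]\mathcal{M}_0 D^{-1}\mathbf{r}^\top[l-j+1]$, and (b) cross-type terms $\mathbf{q}\mathcal{M}_0 D^{-1}\{\mathbf{r}^\top\mathbf{q}[j]\}\mathcal{M}_0 D^{-1}\mathbf{r}^\top[l-j]$ and their $\mathbf{r}$-mirrors. The type~(a) terms telescope in $j$, leaving only the boundary contributions $c_l\mathbf{q}[l+1]\mathcal{M}_0 D^{-1}\mathbf{r}^\top$ and $c_l\mathbf{q}\mathcal{M}_0 D^{-1}\mathbf{r}^\top[l+1]$. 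The type~(b) terms, together with the double-integral residues produced by~(\ref{Sydorenko:eq27}) in $[Q,R]_{<0}$, should recombine into precisely the integral tails of $\gamma L_{k,l}\{\mathbf{q}\}\mathcal{M}_0 D^{-1}\mathbf{r}^\top$ and $\mathbf{q}\mathcal{M}_0 D^{-1}[\gamma L_{k,l}^\tau\{\mathbf{r}\}]^\top$. Together with $-\gamma[B_k,R]_{<0}$, this assembles into the key identity
\[
[L_{k,l},M_n]_{<0}=-\bigl(\gamma L_{k,l}\{\mathbf{q}\}+c_l(M_n)^{l+1}\{\mathbf{q}\}\bigr)\mathcal{M}_0 D^{-1}\mathbf{r}^\top+\mathbf{q}\mathcal{M}_0 D^{-1}\bigl[\gamma L_{k,l}^\tau\{\mathbf{r}\}+c_l(M_n^\tau)^{l+1}\{\mathbf{r}\}\bigr]^\top.
\]
One direction of the equivalence is then immediate. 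For the other, I would compare coefficients in the formal $D^{-k}$-expansion of the two summands, noting that one has $\mathbf{r}^\top$ as its extreme right factor while the other has $\mathbf{q}$ as its extreme left factor, so they cannot cancel and each bracketed quantity must vanish separately.

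I expect the main obstacle to be this telescoping/combinatorial step: verifying that the double-integral contributions from~(\ref{Sydorenko:eq27}) applied to $[Q,R]_{<0}$ fuse exactly with the cross terms produced by the recurrence substitution to reconstruct the full integral tails of $L_{k,l}\{\mathbf{q}\}$ and $L_{k,l}^\tau\{\mathbf{r}\}$ with the correct $\gamma c_l$ prefactor, leaving no spurious surviving double integrals.
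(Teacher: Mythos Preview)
Your proposal is correct and follows essentially the same approach as the paper: split $[L_{k,l},M_n]$ into its $\geq 0$ and $<0$ parts, expand the $<0$ part using the three nontrivial commutators and formulas~(\ref{Sydorenko:eq25})--(\ref{Sydorenko:eq27}), and then telescope the index-shifted sums to arrive at the identical key identity displayed in the paper. The only cosmetic difference is that the paper first leaves the terms $\alpha_n\mathbf{q}_{t_n}[j]-A_n\{\mathbf{q}[j]\}$ intact and only at the end recognizes them (together with the cross terms from~(\ref{Sydorenko:eq27})) as $M_n\{\mathbf{q}[j]\}=\mathbf{q}[j+1]$, whereas you substitute the recurrence one step earlier; the bookkeeping is otherwise identical, and your added remark about why the two summands in the key identity must vanish separately is a justification the paper leaves implicit.
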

 \begin{proof}
 From the equality $[L_{k,l},M_{n}]=[L_{k,l},M_{n}]_{\geq0}+[L_{k,l},M_{n}]_{<0}$ we obtain that the Lax equation $[L_{k,l},M_{n}]=0$ is equivalent to the following one:
\begin{equation}
[L_{k,l},M_{n}]_{\geq0}=0,\,\,[L_{k,l},M_{n}]_{<0}=0.
\end{equation}
Thus, it is sufficient to prove that
$[L_{k,l},M_{n}]_{<0}=0\Longleftrightarrow\,\,\gamma L_{k,l}\{{\bf
q}\}+c_l(M_{n})^{l+1}\{{\bf q}\}=0,$ $ \gamma L^{\tau}_{k,l}\{{\bf
r}\}+c_l(M_{n}^{\tau})^{l+1}\{{\bf r}\}=0$. Using bi-linearity of
the commutator and explicit form (\ref{ex2+1}) of operators
$L_{k,l}$ and $M_{n}$ we obtain:
\begin{equation}\label{Frt}
\begin{array}{c}
[L_{k,l},M_{n}]_{<0}=+c_l\sum_{j=0}^l[{\bf q}[j]{\cal M}_0D^{-1}{\bf
r}^{\top}[l-j],\alpha_n\partial_{t_n}-A_n]_{<0}-\\-c_l\sum_{j=0}^l[{\bf
q}[j]{\cal M}_0D^{-1}{\bf r}^{\top}[l-j],\gamma{\bf q}{\cal
M}_0D^{-1}{\bf r}^{\top}]_{<0}-[B_k,\gamma{\bf q}{\cal
M}_0D^{-1}{\bf r}^{\top}]_{<0}.
\end{array}
\end{equation}

After direct computations of each of the three summands on the
right-hand side of formula (\ref{Frt}) we obtain:

\begin{enumerate}
\item
\begin{equation}\label{equ1}
\begin{array}{c}
c_l\sum_{j=0}^l[{\bf q}[j]{\cal M}_0D^{-1}{\bf
r}^{\top}[l-j],\alpha_n\partial_{t_n}-A_n]_{<0}=-c_l\sum_{j=0}^l\left(\alpha_n{\bf
q}_{t_n}[j]-A_n\{{\bf q}[j]\}\right)\cdot\\\cdot{\cal M}_0D^{-1}{\bf
r}^{\top}[l-j]-c_l\sum_{j=0}^l{\bf q}[j]{\cal
M}_0D^{-1}\left(\alpha_n{\bf r}^{\top}_{t_n}[l-j]+A_n^{\tau}\{{\bf
r}^{\top}[l-j]\}\right).
\end{array}
\end{equation}
Equality (\ref{equ1}) is a consequence of formulae
(\ref{Sydorenko:eq25})-(\ref{Sydorenko:eq26}).
\item

\begin{equation}\label{item2}
\begin{array}{c}
\gamma c_l\sum_{j=0}^l[{\bf q}{\cal M}_0D^{-1}{\bf r}^{\top},{\bf
q}[j]{\cal M}_0D^{-1}{\bf r}^{\top}[l-j]]_{<0}=\gamma
c_l\sum_{j=0}^l {\bf q}{\cal M}_0D^{-1}\{{\bf r}^{\top}{\bf
q}[j]\}\times\\\times{\cal M}_0D^{-1}{\bf r}^{\top}[l-j]-\gamma
c_l\sum_{j=0}^l {\bf q}{\cal M}_0D^{-1}D^{-1}\{{\bf r}^{\top}{\bf
q}[j]\}{\cal M}_0{\bf r}^{\top}[l-j]+\\- \gamma c_l\sum_{j=0}^l {\bf
q}[j]{\cal M}_0D^{-1}\{{\bf r}^{\top}[l-j] {\bf q}\}{\cal
M}_0D^{-1}{\bf r}^{\top}+\\+\gamma c_l\sum_{j=0}^l {\bf q}[j]{\cal
M}_0D^{-1}D^{-1}\{{\bf r}^{\top}[l-j] {\bf q}\}{\cal M}_0{\bf
r}^{\top}.
\end{array}
\end{equation}
Formula (\ref{item2}) follows from (\ref{Sydorenko:eq27}).
\item

\begin{equation}\label{dsa}
\begin{array}{l}
-[B_k,\gamma{\bf q}{\cal M}_0D^{-1}{\bf r}^{\top}]_{<0}=-\gamma
B_k\{{\bf q}\}{\cal M}_0D^{-1}{\bf r}^{\top}+\gamma{\bf q}{\cal
M}_0D^{-1}\left(B_k^{\tau}\{{\bf r}\})^{\top}\right).
\end{array}
\end{equation}
The latter equality is obtained via
(\ref{Sydorenko:eq25})-(\ref{Sydorenko:eq26}).
\end{enumerate}
From formulae (\ref{Frt})-(\ref{dsa}) we have
\begin{equation}
\begin{array}{l}
[L_{k,l},M_{n}]_{<0}=c_l\left(\!\sum_{j=0}^l{{\bf q}}[j]{\cal
M}_0D^{-1}M_{n}^{\tau}\{{\bf r}[l-j]\}-\sum_{j=0}^lM_n\{{\bf
q}[j]\}{\cal M}_0D^{-1}{\bf r}^{\top}[l-j]\!\right)\!-\\-\gamma
L_{k,l}\{{\bf q}\}{\cal M}_0D^{-1}{\bf r}^{\top}+\gamma{\bf q}{\cal
M}_0D^{-1} (L_{k,l}^{\tau}\{{\bf r}\})^{\top}=c_l\sum_{j=0}^l{\bf
q}[j]{\cal M}_0D^{-1}{\bf r}^{\top}[l-j+1]-\\-c_l\sum_{j=0}^l{\bf
q}[j+1]{\cal M}_0D^{-1}{\bf r}^{\top}[l-j]-\gamma L_{k,l}\{{\bf
q}\}{\cal M}_0D^{-1}{\bf r}^{\top}+\gamma{\bf q}{\cal
M}_0D^{-1}L^{\tau}_{k,l}\{{\bf r}\}^{\top}=\\=-(\gamma
L_{k,l}+c_l(M_n)^{l+1})\{{\bf q}\}{\cal M}_0D^{-1}{\bf
r}^{\top}+{\bf q}{\cal M}_0D^{-1}((\gamma
L_{k,l}^{\tau}+c_l(M_n^{\tau})^{l+1})\{{\bf r}\})^{\top}.
\end{array}
\end{equation}
From the last equality we obtain the equivalence of the equation
$[L_k,M_{n,l}]=0$ and (\ref{fre}).

\end{proof}
New hierarchy (\ref{ex2+1}) includes Matrix k-constrained
KP-hierarchy  ($\gamma=0$, $l=0$).


 The following corollary immediately follows from Theorem \ref{T1}:
\begin{corollary}
The Lax equation $[\tilde{L}_{k,l},{M}_{n}]=0$, where
\begin{equation}\label{tM}
\tilde{L}_{k,l}=\gamma L_{k,l}+c_l(M_n)^{l+1}
\end{equation}
 and the operators $\tilde{L}_{k,l}$ and $M_n$ are defined by (\ref{ex2+1}), is equivalent to the system:
 \begin{equation}\label{frex}
 [{L}_{k,l},{M}_{n}]_{\geq0}=0,\tilde{L}_{k,l}\{{\bf q}\}=0,\,\tilde{L}_{k,l}^{\tau}\{{\bf{r}}\}=0.
 \end{equation}
\end{corollary}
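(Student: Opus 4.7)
The plan is to reduce the statement to Theorem \ref{T1} by exploiting the trivial identity $[(M_n)^{l+1},M_n]=0$. First I would expand the bracket: by bilinearity,
\[
[\tilde L_{k,l},M_n]=\gamma[L_{k,l},M_n]+c_l[(M_n)^{l+1},M_n]=\gamma[L_{k,l},M_n].
\]
Hence (for $\gamma\neq 0$) the Lax equation $[\tilde L_{k,l},M_n]=0$ is equivalent to the Lax equation $[L_{k,l},M_n]=0$ already studied in Theorem \ref{T1}. That theorem in turn reformulates it as the three conditions in (\ref{fre}), and the first of these, $[L_{k,l},M_n]_{\geq 0}=0$, is exactly the first equation in (\ref{frex}), so nothing further is needed for that piece.

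It remains to recognize the remaining two equations in (\ref{fre}) as the defining relations $\tilde L_{k,l}\{\mathbf q\}=0$ and $\tilde L_{k,l}^\tau\{\mathbf r\}=0$. The action form is immediate: by linearity of the action on functions, $\tilde L_{k,l}\{\mathbf q\}=\gamma L_{k,l}\{\mathbf q\}+c_l(M_n)^{l+1}\{\mathbf q\}$, which is precisely the second relation of (\ref{fre}). For the adjoint relation the only point that requires care is that formal transposition reverses the order of composition, so $\bigl((M_n)^{l+1}\bigr)^\tau=(M_n^\tau)^{l+1}$; combining this with linearity gives $\tilde L_{k,l}^\tau=\gamma L_{k,l}^\tau+c_l(M_n^\tau)^{l+1}$, and applying both sides to $\mathbf r$ identifies the third equation of (\ref{fre}) with $\tilde L_{k,l}^\tau\{\mathbf r\}=0$. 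There is no real obstacle in this argument; the only thing one has to be slightly careful about is the transpose of the power $(M_n)^{l+1}$, and of course that the bracket $[\tilde L_{k,l},M_n]$ loses its $c_l$-piece because any operator commutes with itself.
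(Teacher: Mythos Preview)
Your proposal is correct and matches the paper's approach: the paper gives no explicit proof but simply states that the corollary ``immediately follows from Theorem~\ref{T1}'', which is exactly the reduction you carry out. Your observation that the equivalence requires $\gamma\neq 0$ is apt; the paper implicitly assumes this (it subsequently sets $\gamma=1$ without loss of generality).
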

The BDk-cKP hierarchy (\ref{ex2+1}) admits an essential
generaliation:
\begin{equation}\label{sex2+1}
\begin{array}{l}
\!P_{k,s}\!=\!B_k\!+\sum_{l=0}^sc_l
\!\sum_{j=0}^l\!{\bf q}[j]\!{\cal M}_0D^{-1}{\bf
r}^{\top}[l-j],\\\,\!\!B_k=\mathcal{J}_kD^k+\sum_{j=0}^{k-1}u_jD^j,\,u_j=u_j(x,t_n),\,\!l=0,\ldots
\\
M_{n}=\alpha_n\partial_{t_n}-{A}_n-\gamma{\bf q}{\cal M}_0D^{-1}{\bf
r}^{\top},\,\,\,
{A}_n={\tilde{\mathcal{J}}}_nD^n+\sum_{i=0}^{n-1}v_iD^i,\,v_i=v_i(x,t_n),
\alpha_n\in{\mathbb{C}},
\end{array}
\end{equation}
\begin{corollary}
Lax equation $[P_{k,s},M_n]=0$ is equivalent to the system:
\begin{equation}
[P_{k,s},M_n]=0,\,\,(\gamma
P_{k,s}+\sum_{l=0}^sc_l(M_n)^{l+1})\{{\bf q}\}=0,\,\, (\gamma
P_{k,s}^{\tau}+\sum_{l=0}^sc_l(M_n^{\tau})^{l+1})\{{\bf r}\}=0.
\end{equation}

\end{corollary}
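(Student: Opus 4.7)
The plan is to reduce this corollary directly to Theorem \ref{T1} by exploiting bilinearity of the commutator. Writing $P_{k,s}=B_k+\sum_{l=0}^{s}c_l S_l$ with $S_l:=\sum_{j=0}^{l}{\bf q}[j]\mathcal{M}_0D^{-1}{\bf r}^{\top}[l-j]$, one sees that $P_{k,s}$ is the same $B_k$-piece plus a linear combination over $l$ of precisely the integral operators that appeared inside each $L_{k,l}$ in Theorem \ref{T1}, while the operator $M_n$ retains its form unchanged. The $(\geq 0)$-equation is automatic upon splitting $[P_{k,s},M_n]$ into its non-negative and negative order parts, so the task reduces to analysing $[P_{k,s},M_n]_{<0}$ and showing it vanishes if and only if the two constraint equations on ${\bf q}$ and ${\bf r}$ hold.

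The key step is to split $[P_{k,s},M_n]_{<0}$ by bilinearity into the $B_k$-contribution (already computed in formula (\ref{dsa}) of Theorem \ref{T1}) and, for each fixed $l\in\{0,\dots,s\}$, the contribution $c_l\bigl[S_l,\alpha_n\partial_{t_n}-A_n-\gamma{\bf q}\mathcal{M}_0D^{-1}{\bf r}^{\top}\bigr]_{<0}$. The latter is precisely the quantity treated in items 1--2 of the proof of Theorem \ref{T1}: applying (\ref{Sydorenko:eq25})--(\ref{Sydorenko:eq27}) together with the defining recurrences ${\bf q}[j+1]=M_n\{{\bf q}[j]\}$ and $({\bf r}^{\top}[j+1])^{\top}=M_n^{\tau}\{{\bf r}[j]\}$ and telescoping in $j$ collapses each $l$-summand to a principal piece involving $(M_n)^{l+1}\{{\bf q}\}$ and $(M_n^{\tau})^{l+1}\{{\bf r}\}$ plus $\gamma$-cross-terms proportional to $S_l\{{\bf q}\}$ and $S_l^{\tau}\{{\bf r}\}$.

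Summing over $l$ and regrouping the $\gamma$-cross-terms together with the $B_k$-piece via $\gamma P_{k,s}=\gamma B_k+\gamma\sum_{l=0}^{s}c_l S_l$ reassembles the full operator $P_{k,s}$ and yields
\[
[P_{k,s},M_n]_{<0}=-\Bigl(\gamma P_{k,s}+\sum_{l=0}^{s}c_l(M_n)^{l+1}\Bigr)\{{\bf q}\}\,\mathcal{M}_0D^{-1}{\bf r}^{\top}+{\bf q}\mathcal{M}_0D^{-1}\Bigl(\bigl(\gamma P_{k,s}^{\tau}+\sum_{l=0}^{s}c_l(M_n^{\tau})^{l+1}\bigr)\{{\bf r}\}\Bigr)^{\top},
\]
from which the equivalence follows by the same kernel-independence argument used at the end of the proof of Theorem \ref{T1}. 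The only real obstacle is bookkeeping: one must simultaneously run a telescoping in $j$ for each fixed $l$ and a summation in $l$, verifying that the $\gamma$-cross-terms produced by item 2 of Theorem \ref{T1} glue with the $B_k$-contribution into precisely $\gamma P_{k,s}\{{\bf q}\}$ and $\gamma P_{k,s}^{\tau}\{{\bf r}\}$. No new analytic input is required; in particular no $[S_l,S_{l'}]$ cross-terms appear, because $P_{k,s}$ is commuted only with $M_n$, which keeps the argument a clean linearity extension of Theorem \ref{T1}.
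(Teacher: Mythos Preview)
Your proposal is correct and follows precisely the approach the paper intends: the paper states Corollary~2 without proof, presenting it as an immediate consequence of Theorem~\ref{T1}, and your argument is exactly that linearity-in-$l$ extension. The decomposition $P_{k,s}=B_k+\sum_{l=0}^{s}c_l S_l$, the reuse of items 1--3 from the proof of Theorem~\ref{T1} for each fixed $l$, the telescoping in $j$, and the regrouping of the $\gamma$-cross-terms with the single $B_k$-contribution to rebuild $\gamma P_{k,s}$ are all the natural steps, and your displayed formula for $[P_{k,s},M_n]_{<0}$ is the correct analogue of the final identity in the proof of Theorem~\ref{T1}. One small remark: the first equation in the system as printed in the paper reads $[P_{k,s},M_n]=0$, which is almost certainly a typographical slip for $[P_{k,s},M_n]_{\geq 0}=0$ (matching (\ref{fre}) and (\ref{frex})); you have interpreted it that way, which is the only reading under which the corollary is nontrivial.
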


We do not tend to consider precisely the case $\gamma=0$ in the
hierarchy (\ref{ex2+1}) in this paper. Thus, without loss of
generality we put $\gamma=1$.

 For further convenience we will consider the Lax pairs
consisting of the operators $\tilde{L}_{k,l}$ (\ref{tM}) and
${M}_{n}$ (\ref{ex2+1}) 
(the operator $\tilde{L}_{k,l}$ is involved in equations for
functions ${\bf q}$ and ${\bf r}$; see formulae (\ref{frex})).
 Consider examples of equations given by operators $\tilde{L}_{k,l}$
 (\ref{tM})
 and ${M}_{n}$ (\ref{ex2+1})
that can be obtained under certain choice of $(k,n,l)$. For
simplicity we will also introduce notations: $t:=t_0$,
$\alpha:=\alpha_0$.





%
\begin{enumerate}
\item $k=2$, $l=1$, $n=0$.
In this case we obtain the following Lax pair in (\ref{ex2+1}):
\begin{equation}\label{1eq}
\begin{array}{l}
\tilde{{L
}}_{2,1}=L_{2,1}+c_1(M_0)^2={{D}^{2}}+v_0+c_1\alpha^2\partial^2_t-2c_1\alpha{\bf
q}{\cal M}_0D^{-1}\partial_t{\bf r}^{\top},\,\,c\in{\mathbb{C}},\,\,\,\\
{{M}_{0}}=\alpha\partial_{t}-{\bf{q}}{{\mathcal{M}}_{0}}{{D}^{-1}}{{\bf{r}}^{\top
}}.
\end{array}
\end{equation}
The commutator equation $[\tilde{L}_{2,1},{M}_{0}]=0$ is equivalent
to the system:
\begin{equation}\label{DSnr}
\begin{array}{c}
 {\bf q}_{xx}+c_1\alpha^2{\bf q}_{tt}+v_0{\bf q}+c_1{\bf
 q}{\cal M}_0S=0,\\
 {\bf r}^{\top}_{xx}+c_1\alpha^2{\bf r}^{\top}_{tt}+{\bf
 r}^{\top}v_0+c_1S{\cal M}_0{\bf r}^{\top}=0,\\
 \alpha v_{0t}=-2({\bf q}{\cal M}_0{\bf r}^{\top})_x,\,\,S_{x}=-2\alpha({\bf r}^{\top}{\bf q})_t.
\end{array}
\end{equation}
Equation (\ref{DSnr}) and its integro-differential Lax
representation in (2+1)-dimensional case was  investigated in
\cite{n}. Consider additional reductions of pair of the operators
$\tilde{L}_{2,1}$ and ${M}_{0}$ (\ref{1eq}) and system (\ref{DSnr}).
After the reduction $c_1\in\mathbb{R}$, $\alpha\in{\mathbb{R}}$,
${\bf r}^{\top}={\bf q}^*$, ${\cal M}_0={\cal M}_0^*$, the operators
$\tilde{L}_{2,1}$ and ${M}_{0}$ are Hermitian and skew-Hermitian
respectively, and (\ref{DSnr}) takes the form
\begin{equation}\label{DS}
\begin{array}{c}
 {\bf q}_{xx}+c_1\alpha^2{\bf q}_{tt}+v_0{\bf q}+c_1{\bf
 q}{\cal M}_0S=0,\\
 \alpha v_{0t}=-2({\bf q}{\cal M}_0{\bf q}^*)_x,\,\,S_{x}=-2\alpha({\bf q}^*{\bf q})_t.
\end{array}
\end{equation}



Let us consider (\ref{DS}) in the case where $\alpha=1$, $u:={\bf
q}$ and $\mu:={\cal M}_0$ are scalars. Then (\ref{DS}) can be
rewritten as
\begin{equation}\label{DSuscal}
\begin{array}{c}
 u_{xx}+c_1u_{tt}+v_0u+\mu c_1 S u=0,\,\,
 v_{0,t}=-2\mu(|u|^2)_x,\,\,S_{x}=-2(|u|^2)_t.
\end{array}
\end{equation}
As a consequence of (\ref{DSuscal}) we obtain
\begin{equation}\label{DS2}
\begin{array}{c}
 u_{xx}+c_1u_{tt}+\mu S_{1}u=0,\,\,\,
 S_{1,xt}=-2(|u|^2)_{xx}-2c_1(|u|^2)_{tt},
\end{array}
\end{equation}
where $S_1=\mu^{-1}v_0+c_1S$. This is the well-known stationary
Davey-Stewartson system (DS-I) and (\ref{DS}) is therefore a matrix
(noncommutative) generalization. The interest in noncommutative
versions of DS systems and some other noncommutative nonlinear
equations (in particular, solution generating technique) has also
arisen recently in \cite{DMH,GM,GNS,GHN}.

\item $k=2$, $l=2$, $n=0$
\begin{equation}\label{1eqka}
\begin{array}{l}
\tilde{{L }}_{2,2}=L_{2,2}+c_2(M_0)^3=c_2\alpha^3\partial_t^3+{{D}^{2}}+v_0-3\alpha^2c_2{\bf q}_t{\cal M}_0D^{-1}{\bf r}^{\top}_t  \\
      +3\alpha c_2{\bf q}{\cal M}_0\partial_tD^{-1}{\bf r}^{\top}{\bf q}{\cal M}_0D^{-1}{\bf r}^{\top}
  - 3\alpha c_2{\bf q}{\cal M}_0D^{-1}\{{\bf r}^{\top}{\bf q}\}_t{\cal M}_0D^{-1}{\bf r}^{\top} \\
  -3c_2\alpha^2\partial_t{\bf q}{\cal M}_0D^{-1}{\bf
  r}^{\top}\partial_t,\\
  {{M}_{0}}=\alpha\partial_{t}-{\bf{q}}{{\mathcal{M}}_{0}}{{D}^{-1}}{{\bf{r}}^{\top
}}.
  \end{array}
\end{equation}
In the vector case ($m=1$) after setting $\mu:={\cal
M}_0\in{\mathbb{C}}$, $\alpha=1$ the commutator equation
$[\tilde{L}_{2,2},{M}_{0}]=0$ is equivalent to the system:
\begin{equation}\label{aDSL1M3}
\begin{array}{l}
-{\bf q}_{xx}-c_2{\bf q}_{ttt}- v_0{\bf q}+ 3c_2\mu({\bf q}S_1)_t
-3c_2\mu{\bf q}S_2=0, \\
 -{\bf r}^{\top}_{xx}-c_2{\bf
r}^{\top}_{ttt}-{\bf r}^{\top}v_0+ 3c_2\mu S_1{\bf
r}^{\top}_t+3c_2\mu S_2{\bf r}^{\top}=0, \\
v_{0t}=-2\mu({\bf q}{\bf r}^{\top})_x,\,\,
S_{1x}=({\bf{r}}^{\top}{\bf q})_t,\,\,\,\,S_{2x}=({\bf
r}^{\top}_t{\bf q})_t.
\end{array}
\end{equation}
Thus, equation given by commutator $[\tilde{L}_{2,2},M_0]=0$
generalizes (\ref{aDSL1M3}) to the matrix case but we do not present
it because of its rather complicated structure.




\item $k=3,l=1$, $n=0$.

In this case the operator $\tilde{L}_{3,1}$ has the form:
\begin{equation}\label{L1M3e}
\begin{array}{l}
\tilde{L}_{3,1}=L_{3,1}-c_1(M_0)^2= {D}^3+ v_{1} {D}+
v_{0}+c_1\alpha^2\partial^2_t-2c_1\alpha{\bf q}{\cal M}_0D^{-1}{\bf
r}^{\top}_t-\\-2c_1\alpha{\bf q}{\cal
M}_0D^{-1}{\bf r}^{\top}\partial_t,\\
{{M}_{0}}=\alpha\partial_{t}-{\bf{q}}{{\mathcal{M}}_{0}}{{D}^{-1}}{{\bf{r}}^{\top
}}.
\end{array}
\end{equation}\

In the vector case ($N=1$) the equation $[\tilde{L}_{3,1},M_0]=0$ is
equivalent to the system:
\begin{equation}\label{mKDV2}
\begin{array}{l}
{\bf q}_{xxx}+c_1\alpha^2{\bf q}_{tt}+ v_{1}{\bf q}_{x} + v_{0}{\bf
q}+c_1{\bf
 q}{\cal M}_0S_1=0,\\
-  {\bf r}^{\top}_{xxx}+c_1\alpha^2{\bf r}^{\top}_{tt} -({\bf
r}^{\top}v_{1})_x + {\bf r}^{\top}v_{0}+c_1S_1{\cal M}_0{\bf r}^{\top}=0,\\
\alpha{v_{0,t}}
 = - 3({\bf q}_{x}{\cal M}_0{\bf r}^{\top})_{x},\,\,\alpha  {v_{1,t}} =-3 ({\bf q}{\cal M}_0{\bf
r}^{\top})_{x},\,\,S_{1x}=-2\alpha({\bf r}^{\top}{\bf q})_t.
\end{array}
\end{equation}
\item $k=3,l=2$,$n=0$.

\begin{equation}\label{L1M3}
\begin{array}{l}
 \tilde{L}_{3,2}=D^3+c_2\alpha^3\partial_t^3-v_1D+
v_0-3\alpha^2c_2{\bf q}_t{\cal M}_0D^{-1}{\bf r}^{\top}_t  \\
      +3\alpha c_2{\bf q}{\cal M}_0\partial_tD^{-1}{\bf r}^{\top}{\bf q}{\cal M}_0D^{-1}{\bf r}^{\top}
  - 3\alpha c_2{\bf q}{\cal M}_0D^{-1}\{{\bf r}^{\top}{\bf q}\}_t{\cal M}_0D^{-1}{\bf r}^{\top} \\
  -3c_2\alpha^2\partial_t{\bf q}{\cal M}_0D^{-1}{\bf
  r}^{\top}\partial_t,\\
  M_0=\alpha\partial_{t}-{\bf{q}}{{\mathcal{M}}_{0}}{{D}^{-1}}{{\bf{r}}^{\top
}}.
\end{array}
\end{equation}
The Lax equation $[\tilde{L}_{3,2},M_0]=0$ results in the
(1+1)-dimensional matrix mKdV-type system that has rather
complicated form. For this reason we will consider only some special
cases of it (matrix generalization of (2+1)-dimensional mKdV system
and its Lax representation can be found in \cite{n}):

\begin{enumerate}
\item Consider the scalar case of the pair (\ref{L1M3}) (i.e., $N=m=1$), setting
${\mathbb{R}}\ni\mu:={\cal M}_0$, $q(x,t):={\bf q}(x,t)$, ${
r}(x,t):={\bf r}(x,t)$ and $\alpha=1$. Under additional Hermitian
conjugation reduction: $c_2\in{\mathbb{R}}$, ${r}={\bar{q}}$,
Lax equation $[\tilde{L}_{3,2},M_0]=0$ is equivalent to the
equation:
\begin{eqnarray}\label{DSL1M3rd}
&& {q}_{xxx}+c_2{q}_{ttt}-
3\mu{q}_x\int|q|^2_xdt- 3c_2\mu{q}_t\int|q|^2_tdx-\nonumber \\
&&-3c_2\mu{q}\int({\bar{q}}q_t)_tdx-3\mu{q}\int({q}_xq)_xdt=0.
\end{eqnarray}
after setting $t=x$, $q=\bar{q}$ and $c_2=-2$ (\ref{DSL1M3rd}) takes
the form
\begin{equation}\label{DSL1M3rdddd}
{q}_{xxx}- 6\mu q^2q_x=0,
\end{equation}
which is the stationary mKdV equation. The system 
(\ref{DSL1M3rd}) is its complex spatially two-dimensional
generalization.

\item Consider the scalar case ($N=1$, $m=1$) of the Lax pair given by (\ref{L1M3}) under additional reduction $\beta=1$, $\mu:={\cal M}_0=1$, $r:={\bf r}=\nu$ with a constant $\nu{\in{\mathbb{R}}}$.
In terms of $u := {\bf q}\nu$ Lax equation $[\tilde{L}_{3,2},M_0]=0$
is equivalent to the following one:
\begin{equation}\label{DSL1M3r2scal}
u_{xxx}+c_2u_{ttt}-3D\left\{\left(\int u_x
dt\right)u\right\}-3c_2\partial_t\left\{u\left(\int u_t
dx\right)\right\}=0,
\end{equation}
which is the stationary case of Nizhnik equation \cite{Nizhnik80}.
\end{enumerate}
\end{enumerate}

\begin{center}
\section{Dressing methods for the new bidirectional (1+1)-dimensional k-constrained KP
hierarchy}\label{dressed}
\end{center}

In this section our aim is to consider hierarchy of equations given
by the Lax pair (\ref{ex2+1}) in case $\gamma=1$.
We suppose that the operators $L_{k,l}$ and $M_{n}$ in (\ref{ex2+1})
satisfy the commutator equation $[L_{k,l},M_{n}]=0$. At first we
recall some results from \cite{K2009}. Let $N\times K$-matrix
functions $\varphi$ and $\psi$ be solutions of linear problems:
\begin{equation}\label{pr}
\begin{array}{c}
M_{n}\{\varphi\}=\varphi\Lambda,\,\,M_n^{\tau}\{\psi\}=\psi\tilde{\Lambda},\,\,\Lambda,\tilde{\Lambda}\in Mat_{K\times K}({\mathbb{C}}).\\
\end{array}
\end{equation}
Introduce  binary Darboux transformation (BDT) in the following way:
\begin{equation}\label{W}
W=I-\varphi\left(C+D^{-1}\{\psi^{\top}\varphi\}\right)^{-1}D^{-1}\psi^{\top},
\end{equation}
where $C$ is a $K\times K$-constant nondegenerate matrix. The
inverse operator $W^{-1}$ has the form:
\begin{equation}\label{W-}
W^{-1}=I+\varphi
D^{-1}\left(C+D^{-1}\{\psi^{\top}\varphi\}\right)^{-1}\psi^{\top}.
\end{equation}
 The following theorem is proven in \cite{K2009}.
\begin{theorem}{\cite{K2009}}\label{2009}
The operator $\hat{M}_n:=WM_nW^{-1}$  obtained from $M_n$ in
(\ref{ex2+1}) via BDT (\ref{W}) has the form
\begin{equation}\label{Lop}
\hat{M}_n:=WM_nW^{-1}=\alpha_n\partial_{t_n}-\hat{A}_n-\hat{\bf
q}{\cal M}_0D^{-1}{\hat{{\bf r}}}^{\top}+\Phi{\cal
M}_1D^{-1}\Psi^{\top},\,
\hat{A}_n={\tilde{\mathcal{J}}}_nD^n+\sum_{j=0}^{n-1}\hat{v}_jD^j,
\end{equation}
where
\begin{equation}\label{DSM}
\begin{array}{l}
{\cal M}_1=C\Lambda-\tilde{\Lambda}^{\top}C,\,
\Phi=\varphi\Delta^{-1},\,\,
\Psi=\psi\Delta^{-1,\top},\,\Delta=C+D^{-1}\{\psi^{\top}\varphi\},\\
{\hat{\bf q}}=W\{{\bf q}\},\,\,{\hat{\bf r}}=W^{-1,\tau}\{{\bf r}\}.
\end{array}
\end{equation}
 $\hat{v}_j$ are $N\times N$-matrix coefficients depending on functions $\varphi$,
$\psi$ and $v_j$.
\end{theorem}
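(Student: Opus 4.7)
The plan is to prove this by direct computation: substitute the explicit forms of $W$ and $W^{-1}$ into $\hat{M}_n=WM_nW^{-1}$, systematically push $M_n$ through the factors $\varphi D^{-1}$ and $D^{-1}\psi^\top$ using identities (\ref{Sydorenko:eq25})--(\ref{Sydorenko:eq27}), and then apply the eigenvalue relations (\ref{pr}) to collapse the remaining pieces into the stated form. First I would record the factored shape $W=I-\Phi D^{-1}\psi^\top$ and $W^{-1}=I+\varphi D^{-1}\Psi^\top$ with $\Psi^\top=\Delta^{-1}\psi^\top$, so that the conjugation expands into four summands:
\begin{equation*}
\hat M_n = M_n + M_n\varphi D^{-1}\Psi^\top - \Phi D^{-1}\psi^\top M_n - \Phi D^{-1}\psi^\top M_n\varphi D^{-1}\Psi^\top.
\end{equation*}

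Next I would process each summand separately. For $M_n\varphi D^{-1}\Psi^\top$, I would use the Leibniz rule for $\alpha_n\partial_{t_n}-A_n$ together with (\ref{Sydorenko:eq25}) applied to the integral piece of $M_n$ to write $M_n\varphi=\varphi M_n + M_n\{\varphi\}+(\text{purely differential remainder})$; the cross term $M_n\{\varphi\}=\varphi\Lambda$ then feeds a contribution $\varphi\Lambda D^{-1}\Psi^\top$ into the integral part. Analogously, for $\Phi D^{-1}\psi^\top M_n$ I would use (\ref{Sydorenko:eq26}) and replace $M_n^\tau\{\psi\}$ by $\psi\tilde\Lambda$, producing $-\Phi D^{-1}\tilde\Lambda^\top\psi^\top$. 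The quartic term is reduced by identity (\ref{Sydorenko:eq27}), which turns the product of two $D^{-1}$'s into a single $D^{-1}$ acting on the kernel $\psi^\top\varphi$ plus an antiderivative correction; recognizing $D^{-1}\{\psi^\top\varphi\}=\Delta-C$ inside the parentheses is what inserts the constant matrix $C$ and ultimately yields $\mathcal{M}_1=C\Lambda-\tilde\Lambda^\top C$ after combining with the $\Lambda$ and $\tilde\Lambda$ contributions from the two quadratic summands.

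Once the integral terms are assembled, I would collect all contributions that originate from the $-\mathbf{q}\mathcal{M}_0 D^{-1}\mathbf{r}^\top$ piece of $M_n$ (after being dressed by $W$ on the left and $W^{-1}$ on the right) into the form $-\hat{\mathbf{q}}\mathcal{M}_0 D^{-1}\hat{\mathbf{r}}^\top$, identifying $\hat{\mathbf{q}}=W\{\mathbf{q}\}$ and $\hat{\mathbf{r}}=W^{-1,\tau}\{\mathbf{r}\}$ directly from the definitions of $W$ and $W^{-1}$. The residual integral contributions, coming entirely from the $\alpha_n\partial_{t_n}-A_n$ piece via the $\Lambda,\tilde\Lambda$ substitutions, then regroup as $\Phi\mathcal{M}_1 D^{-1}\Psi^\top$. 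What remains are purely differential terms of order at most $n$, with the top coefficient $\tilde{\mathcal{J}}_n$ unchanged since the leading symbol of $W$ is $I$; calling this piece $\hat A_n=\tilde{\mathcal{J}}_nD^n+\sum_{j=0}^{n-1}\hat v_jD^j$ gives the stated form for $\hat M_n$.

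The main obstacle is bookkeeping: the expansion of $WM_nW^{-1}$ generates roughly a dozen pseudodifferential terms, each of which must be split into its $(\cdot)_{\geq0}$ and $(\cdot)_{<0}$ parts via (\ref{Sydorenko:eq25})--(\ref{Sydorenko:eq27}), and cancellations only become manifest after the eigenvalue equations (\ref{pr}) are inserted. The delicate point is verifying that the combination $C\Lambda-\tilde\Lambda^\top C$ emerges exactly, with the right sign and placement, so that the integral residue has the clean factored form $\Phi\mathcal{M}_1 D^{-1}\Psi^\top$; in particular, if $\Lambda=\tilde\Lambda=0$ then $\mathcal{M}_1=0$ and the BDT is a pure symmetry of the constrained class, which serves as a useful consistency check.
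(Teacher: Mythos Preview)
The paper does not actually prove Theorem~\ref{2009} in the text; it is quoted from \cite{K2009}, followed only by the remark that exact forms of the coefficients $\hat v_j$ are given there. So there is no in-paper proof to compare your proposal against. Your direct-computation strategy---writing $W=I-\Phi D^{-1}\psi^\top$, $W^{-1}=I+\varphi D^{-1}\Psi^\top$, expanding $WM_nW^{-1}$ into four summands, reducing each with identities (\ref{Sydorenko:eq25})--(\ref{Sydorenko:eq27}), and invoking the spectral relations (\ref{pr}) to extract $\mathcal{M}_1=C\Lambda-\tilde\Lambda^\top C$---is the standard route for such dressing statements and is almost certainly what the cited reference carries out.

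One small caution on the bookkeeping: your intermediate identity ``$M_n\varphi=\varphi M_n+M_n\{\varphi\}+(\text{purely differential remainder})$'' is not literally correct, since $M_n$ contains the integral piece $-\mathbf{q}\mathcal{M}_0D^{-1}\mathbf{r}^\top$ and hence $M_n\varphi$ is not a differential perturbation of $\varphi M_n$. In practice you must treat the three parts $\alpha_n\partial_{t_n}$, $-A_n$, and $-\mathbf{q}\mathcal{M}_0D^{-1}\mathbf{r}^\top$ of $M_n$ separately---the first two via (\ref{Sydorenko:eq25})--(\ref{Sydorenko:eq26}) and the Leibniz rule for $\partial_{t_n}$, the last via (\ref{Sydorenko:eq27})---and then reassemble. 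With that adjustment the plan is sound, and the key cancellation producing $C\Lambda-\tilde\Lambda^\top C$ (rather than $\Delta\Lambda-\tilde\Lambda^\top\Delta$) indeed comes from recognizing $D^{-1}\{\psi^\top\varphi\}=\Delta-C$ inside the quartic term, exactly as you indicate.
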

Exact forms of all coefficients  $\hat{v}_j$ are given in
\cite{K2009}.

The following corollary follows from Theorem \ref{2009}:
\begin{corollary}
The functions $\Phi=\varphi\Delta^{-1}=W\{\varphi\}C^{-1}$ and $\Psi
=\psi\Delta^{-1,\top}=W^{-1,\tau}\{\psi\}C^{\top,-1}$ satisfy the
equations
\begin{equation}\label{ro}
\hat{M}_n\{\Phi\}=\Phi C\Lambda C^{-1},\,\,
\hat{M}_n^{\tau}\{\Psi\}=\Psi C^{\top}\tilde{\Lambda}C^{\top,-1}.
\end{equation}
\end{corollary}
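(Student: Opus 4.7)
The plan is to derive both eigenfunction equations from the intertwining relation $\hat M_n W = W M_n$ (equivalently $\hat M_n = W M_n W^{-1}$ given by Theorem \ref{2009}), exploiting the fact that $C$, $\Lambda$ and $\tilde\Lambda$ are constant matrices so they commute with the action of differential and pseudodifferential operators. The two identities $\Phi=W\{\varphi\}C^{-1}$ and $\Psi=W^{-1,\tau}\{\psi\}C^{\top,-1}$ that are asserted in the statement are the bridge between the ``dressed'' description of $\Phi,\Psi$ and the original solutions $\varphi,\psi$ of the linear problems (\ref{pr}).

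First I would check these two algebraic identities directly from (\ref{W}), (\ref{W-}) and (\ref{DSM}). For $\Phi$: applying $W=I-\varphi\Delta^{-1}D^{-1}\psi^\top$ to $\varphi$ gives
\begin{equation*}
W\{\varphi\}=\varphi-\varphi\Delta^{-1}D^{-1}\{\psi^\top\varphi\}=\varphi\Delta^{-1}\bigl(\Delta-D^{-1}\{\psi^\top\varphi\}\bigr)=\varphi\Delta^{-1}C,
\end{equation*}
so $W\{\varphi\}C^{-1}=\varphi\Delta^{-1}=\Phi$. An analogous computation using $W^{-1}=I+\varphi D^{-1}\Delta^{-1}\psi^\top$ and $(D^{-1})^\tau=-D^{-1}$ yields $W^{-1,\tau}\{\psi\}=\psi\Delta^{-1,\top}C^\top=\Psi C^\top$, i.e. $\Psi=W^{-1,\tau}\{\psi\}C^{\top,-1}$.

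Next I would compute $\hat M_n\{\Phi\}$ by unpacking the factorization. Since $C^{-1}$ is a constant matrix, it commutes with the action of $W^{-1}$ from the left on a function, so
\begin{equation*}
\hat M_n\{\Phi\}=W M_n W^{-1}\{W\{\varphi\}C^{-1}\}=W M_n\{\varphi\}C^{-1}=W\{\varphi\Lambda\}C^{-1}=W\{\varphi\}\Lambda C^{-1}=\Phi C\Lambda C^{-1},
\end{equation*}
where I used the linear problem $M_n\{\varphi\}=\varphi\Lambda$ from (\ref{pr}) and again constancy of $\Lambda C^{-1}$ to pull it through $W$. The transposed identity follows by exactly the same pattern applied to $\hat M_n^\tau=(WM_nW^{-1})^\tau=W^{-1,\tau}M_n^\tau W^\tau$, together with $W^\tau W^{-1,\tau}=I$ and the second linear problem $M_n^\tau\{\psi\}=\psi\tilde\Lambda$.

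The only subtlety I anticipate is purely bookkeeping: one must be careful that $C$, $\Lambda$, $\tilde\Lambda$ are treated as constant matrix multipliers acting on the right of matrix-valued functions, so that they may be moved past $W$, $W^{-1}$, and their transposes without interfering with the pseudodifferential operator parts. Once that is handled, there is no genuine obstacle; the corollary is a direct consequence of the intertwining property recorded in Theorem \ref{2009}.
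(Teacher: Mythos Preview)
Your argument is correct and is exactly the expected derivation: the paper states the corollary as an immediate consequence of Theorem~\ref{2009} and gives no separate proof, so your use of the intertwining relation $\hat M_n W=WM_n$ (and its transpose) together with constancy of $C,\Lambda,\tilde\Lambda$ is precisely what is intended. The preliminary verification of $\Phi=W\{\varphi\}C^{-1}$ and $\Psi=W^{-1,\tau}\{\psi\}C^{\top,-1}$ from $\Delta=C+D^{-1}\{\psi^\top\varphi\}$ is also correct.
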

 For further purposes we will need the following lemmas.
\begin{lemma}
Let ${\cal M}_{l+1}$ be a matrix of the form
\begin{equation}
{\cal M}_{l+1}=C\Lambda^{l+1}-(\tilde{\Lambda}^{\top})^{l+1}C,\,\,
l\in{\mathbb{N}}.
\end{equation}
The following formula holds:
\begin{equation}\label{lemma11}
{\cal M}_{l+1}=\sum_{s=0}^lC\Lambda^sC^{-1}{\cal
M}_1C^{-1}(\tilde{\Lambda}^{\top})^{l-s}C.
\end{equation}
\end{lemma}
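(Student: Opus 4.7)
The plan is to prove the identity by substituting the definition of $\mathcal{M}_1$ into each summand on the right-hand side and recognizing the resulting sum as a telescoping one.

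First I would use $\mathcal{M}_1 = C\Lambda - \tilde{\Lambda}^{\top} C$ to rewrite the $s$-th summand. The factors $C^{-1}$ on either side of $\mathcal{M}_1$ conveniently cancel the $C$'s inside, so
\begin{equation*}
C\Lambda^s C^{-1}\mathcal{M}_1 C^{-1}(\tilde{\Lambda}^{\top})^{l-s}C
= C\Lambda^{s+1}C^{-1}(\tilde{\Lambda}^{\top})^{l-s}C - C\Lambda^s C^{-1}(\tilde{\Lambda}^{\top})^{l-s+1}C.
\end{equation*}

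Next I would set $T_s := C\Lambda^s C^{-1}(\tilde{\Lambda}^{\top})^{l-s+1}C$ and observe that the first term on the right equals $T_{s+1}$, so each summand is precisely $T_{s+1}-T_s$. Summing over $s=0,\dots,l$ telescopes to
\begin{equation*}
T_{l+1}-T_0 = C\Lambda^{l+1}C^{-1}\cdot C - C\cdot C^{-1}(\tilde{\Lambda}^{\top})^{l+1}C = C\Lambda^{l+1} - (\tilde{\Lambda}^{\top})^{l+1}C,
\end{equation*}
which is the definition of $\mathcal{M}_{l+1}$.

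There is no real obstacle here — the only thing to watch is bookkeeping of the $C$ and $C^{-1}$ factors so that the pairing that makes the telescoping work is visible. An equivalent route would be induction on $l$: the base case $l=0$ reduces to $\mathcal{M}_1 = \mathcal{M}_1$, and the step from $l-1$ to $l$ amounts to the identity $\sum_{s=0}^{l} X_s = X_l + \sum_{s=0}^{l-1} X_s$ together with one use of $\mathcal{M}_1 = C\Lambda - \tilde{\Lambda}^{\top}C$; but the telescoping proof is a one-liner and is the version I would record.
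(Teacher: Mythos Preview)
Your telescoping argument is correct and complete. It is, however, a genuinely different route from the paper's proof. The paper does not substitute $\mathcal{M}_1$ directly; instead it first records the recursions
\[
\mathcal{M}_2 = C\Lambda C^{-1}\mathcal{M}_1 + \mathcal{M}_1 C^{-1}\tilde{\Lambda}^{\top}C,\qquad
\mathcal{M}_{l+1} = C\Lambda C^{-1}\mathcal{M}_l + \mathcal{M}_l C^{-1}\tilde{\Lambda}^{\top}C - C\Lambda C^{-1}\mathcal{M}_{l-1}C^{-1}\tilde{\Lambda}^{\top}C,
\]
and then proves by induction on an auxiliary index $k$ a two-sum identity expressing $\mathcal{M}_{l+1}$ in terms of $\mathcal{M}_{l-k+1}$ and $\mathcal{M}_{l-k}$; specializing $k=l-2$ and using the base recursion collapses this to the desired single sum. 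Your approach short-circuits all of this: by inserting $\mathcal{M}_1=C\Lambda-\tilde{\Lambda}^{\top}C$ and cancelling the flanking $C^{-1}$'s, each summand becomes a consecutive difference $T_{s+1}-T_s$, and the sum telescopes in one line. Your version is more elementary and transparent; the paper's recursion-based derivation, while longer, does exhibit the family of intermediate identities relating $\mathcal{M}_{l+1}$ to lower $\mathcal{M}_j$, which could be of independent use elsewhere but is not needed for the lemma itself.
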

\begin{proof}
The following recurrent formulae that can easily be checked by
direct calculation:
\begin{equation}\label{mw}
{\cal M}_{2}=C\Lambda C^{-1}{\cal M}_1+{\cal M}_1
C\tilde{\Lambda}^{\top}C^{-1},
\end{equation}
\begin{equation}\label{vc}
{\cal M}_{l+1}=C\Lambda C^{-1}{\cal M}_{l}+{\cal
M}_{l}C^{-1}{\tilde{\Lambda}}^{\top}C-C\Lambda C^{-1}{\cal
M}_{l-1}C^{-1}{\tilde{\Lambda}}^{\top}C.
\end{equation}
Using formulae (\ref{mw})-(\ref{vc}) and induction by $k$ we can
prove that the following formula holds:
\begin{equation}\label{fla}
{\cal M}_{l+1}\!\!=\!\!\sum_{s=0}^kC\Lambda^sC^{-1}{\cal
M}_{l-k+1}C^{-1}({\tilde{\Lambda}}^{\top})^{k-s}\!C\!-\!\sum_{s=1}^kC\Lambda^sC^{-1}{\cal
M}_{l-k}C^{-1}({\tilde{\Lambda}}^{\top})^{k-s+1}C,k\leq l\!-\!2,
\end{equation}
for some $k\leq l-2$.
After the substitution of $k=l-2$ in (\ref{fla}) and using
(\ref{mw}) we can obtain formula (\ref{lemma11}).
This finishes the proof of formula (\ref{lemma11}) and Lemma 1.

\end{proof}

\begin{lemma}
The following formula
\begin{equation}
\Phi{\cal M}_{l+1}D^{-1}\Psi^{\top}=\sum_{s=0}^{l}\Phi[s]{\cal
M}_{1}D^{-1}\Psi^{\top}[l-s],
\end{equation}
holds, where
\begin{equation}\label{fk}
\Phi[j]:=(\hat{M}_n)^j\{\Phi\},\,\Psi[j]:=(\hat{M}_n^{\tau})^j\{\Psi\}.
\end{equation}
\end{lemma}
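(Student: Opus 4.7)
The plan is to compute $\Phi[s]$ and $\Psi[l-s]$ explicitly using the corollary preceding Lemma 1, and then recognize the resulting sum as ${\cal M}_{l+1}$ via Lemma 1.

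First I would iterate the corollary. Since $\hat{M}_n\{\Phi\}=\Phi\, C\Lambda C^{-1}$ with $C\Lambda C^{-1}$ constant, an easy induction on $j$ yields
\begin{equation}
\Phi[j]=(\hat{M}_n)^j\{\Phi\}=\Phi\, C\Lambda^{j}C^{-1},
\end{equation}
and analogously
\begin{equation}
\Psi[j]=(\hat{M}_n^{\tau})^j\{\Psi\}=\Psi\, C^{\top}\tilde{\Lambda}^{j}C^{\top,-1}.
\end{equation}
Transposing the second relation gives $\Psi^{\top}[j]=C^{-1}(\tilde{\Lambda}^{\top})^{j}C\,\Psi^{\top}$.

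Next I would substitute these expressions into the right-hand side of the claimed identity. Since the matrices $C$, $\Lambda$, $\tilde{\Lambda}$ are constant, they commute with $D^{-1}$ as operators, so the factor $C^{-1}(\tilde{\Lambda}^{\top})^{l-s}C$ slides through $D^{-1}$:
\begin{equation}
\sum_{s=0}^{l}\Phi[s]{\cal M}_{1}D^{-1}\Psi^{\top}[l-s]
=\Phi\left(\sum_{s=0}^{l}C\Lambda^{s}C^{-1}{\cal M}_{1}C^{-1}(\tilde{\Lambda}^{\top})^{l-s}C\right)D^{-1}\Psi^{\top}.
\end{equation}

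Finally, I would invoke Lemma 1 (formula (\ref{lemma11})) to recognize the bracketed sum as ${\cal M}_{l+1}$, which yields $\Phi{\cal M}_{l+1}D^{-1}\Psi^{\top}$ and completes the proof. The one point that requires a brief justification is the commutation of constant matrices with $D^{-1}$ in the integral operator $C^{-1}(\tilde{\Lambda}^{\top})^{l-s}C\,D^{-1}\Psi^{\top} = D^{-1}C^{-1}(\tilde{\Lambda}^{\top})^{l-s}C\,\Psi^{\top}$; this is immediate since $D^{-1}$ acts only on functions of $x$ and $t_n$. I do not expect any real obstacle: once the two elementary formulas for $\Phi[s]$ and $\Psi^{\top}[l-s]$ are in hand, Lemma 1 does all the work.
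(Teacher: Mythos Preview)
Your proposal is correct and follows essentially the same route as the paper: both proofs combine the iterated form of the corollary (giving $\Phi[s]=\Phi\,C\Lambda^{s}C^{-1}$ and $\Psi^{\top}[l-s]=C^{-1}(\tilde{\Lambda}^{\top})^{l-s}C\,\Psi^{\top}$) with the decomposition of ${\cal M}_{l+1}$ from Lemma~1. The only cosmetic difference is that the paper computes from the left-hand side to the right while you compute from the right-hand side to the left, and you make explicit the commutation of constant matrices with $D^{-1}$ that the paper leaves implicit.
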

\begin{proof}
Lemma 2 is a consequence of Corollary 1 and formula (\ref{lemma11})
of Lemma 1. Namely, the following relations hold:
\begin{equation}\nonumber
\Phi{\cal
M}_{l+1}D^{-1}\Psi^{\top}=\sum_{s=0}^l{\Phi}C\Lambda^sC^{-1}{\cal
M}_1C^{-1}D^{-1}(\tilde{\Lambda}^{\top})^{l-s}C\Psi^{\top}=\sum_{s=0}^{l}\Phi[s]{\cal
M}_{1}D^{-1}\Psi^{\top}[l-s].
\end{equation}
\end{proof}
Now we assume that the functions $\varphi$ and $\psi$ in addition to
equations (\ref{pr}) satisfy the equations:
\begin{equation}\label{prm}
L_{k,l}\{\varphi\}=-c_l\varphi\Lambda^{l+1}=-c_lM_n^{l+1}\{\varphi\},\,\,L_{k,l}^{\tau}\{\psi\}=-c_l\psi\tilde{\Lambda}^{l+1}=-c_l(M_n^{\tau})^{l+1}\{\psi\}.
\end{equation}
Problems (\ref{prm}) can be rewritten via the operator
$\tilde{L}_{k,l}$ (\ref{tM}) as:
\begin{equation}
\tilde{L}_{k,l}\{\varphi\}=0,\,\,\,
\tilde{L}_{k,l}^{\tau}\{\psi\}=0.
\end{equation}
 The following theorem for the operators $L_{k,l}$ (\ref{ex2+1}) and
 $\tilde{L}_{k,l}$ (\ref{tM}) holds:
\begin{theorem}\label{M}
Let $N\times K$ -matrix functions $\varphi$, $\psi$ be solutions of
problems (\ref{pr}) and (\ref{prm}). The transformed operator
$\hat{L}_{k,l}:=WL_{k,l}W^{-1}$ obtained via BDT $W$ (\ref{W}) has
the form:
\begin{equation}\label{Mop}
\begin{array}{l}
\hat{L}_{k,l}:=WL_{k,l}W^{-1}=\hat{B}_k+c_l\sum_{j=0}^l\hat{{\bf
q}}[j]{\cal M}_0D^{-1}\hat{{\bf
r}}^{\top}[l-j]+\\+c_l\sum_{s=0}^{l}\Phi[s]{\cal
M}_{1}D^{-1}\Psi^{\top}[l-s],\,\,\hat{B}_k=\mathcal{J}_kD^k+\sum_{i=0}^{k-1}\hat{u}_iD^i,
\end{array}
\end{equation}
where the matrix ${\cal M}_n$ and the functions $\hat{{\bf q}}$,
$\hat{{\bf r}}$, $\Phi[s]$, $\Psi[l-s]$ are defined by formulae
(\ref{DSM}), (\ref{fk}) and $\hat{{\bf q}}[j]$, $\hat{{\bf r}[j]}$
have the form
\begin{equation}
\hat{{\bf q}}[j]=(\hat{M}_n^j)\{\hat{\bf q}\},\,\,\, \hat{{\bf
r}}[j]=(\hat{M}_n^{j})^{\tau}\{{\hat{\bf r}}\},
\end{equation}
 $\hat{v}_i$ are
$N\times N$-matrix coefficients that depend on the functions
$\varphi$, $\psi$ and $v_i$. The transformed operator
${\hat{\tilde{L}}}_{k,l}=W\tilde{L}_{k,l}W^{-1}$ has the form:
\begin{equation}\label{tmh}
\hat{{\tilde{L}}}_{k,l}=W\tilde{L}_{k,l}W^{-1}=\hat{L}_{k,l}+c_l(\hat{M}_n)^{l+1},
\end{equation}
where $\hat{M}_n$ is given by (\ref{Lop}).
\end{theorem}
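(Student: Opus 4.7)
The plan is to compute $\hat L_{k,l}=WL_{k,l}W^{-1}$ by conjugating the splitting $L_{k,l}=B_k+c_l\sum_{j=0}^l{\bf q}[j]{\cal M}_0D^{-1}{\bf r}^{\top}[l-j]$ term by term, in the style of the proof of Theorem \ref{2009} from \cite{K2009}, and then using the additional spectral constraints (\ref{prm}) to collapse the resulting residual into a closed form. The computation is organised around the three pseudodifferential identities (\ref{Sydorenko:eq25})--(\ref{Sydorenko:eq27}) and around the simple observations $\hat{\bf q}[j]=W\{{\bf q}[j]\}$ and $\hat{\bf r}[j]=W^{-1,\tau}\{{\bf r}[j]\}$, which both follow from the identity $\hat M_n^{\,j}=W M_n^{\,j}W^{-1}$ applied to ${\bf q}$ and ${\bf r}$.

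First I would conjugate $B_k$. Inserting the explicit forms of $W$ and $W^{-1}$ from (\ref{W})--(\ref{W-}) into $WB_kW^{-1}$ and applying (\ref{Sydorenko:eq25})--(\ref{Sydorenko:eq26}), one obtains $WB_kW^{-1}=\hat B_k+\mathcal{E}_B$, where $\hat B_k=\mathcal{J}_kD^k+\sum_{i=0}^{k-1}\hat u_iD^i$ is a differential operator of order $k$ and $\mathcal{E}_B$ is an integral tail built from $B_k\{\varphi\}$ and $B_k^{\tau}\{\psi\}$ glued through $\Delta^{-1}$. Next I would conjugate each integral summand ${\bf q}[j]{\cal M}_0 D^{-1}{\bf r}^{\top}[l-j]$: inserting $I=W^{-1}W$ between the two integral factors and then applying (\ref{Sydorenko:eq27}) together with (\ref{Sydorenko:eq25})--(\ref{Sydorenko:eq26}) produces the expected term $\hat{\bf q}[j]{\cal M}_0D^{-1}\hat{\bf r}^{\top}[l-j]$ along with further integral tails carrying iterated actions $M_n^{\,s}\{\varphi\}$ and $(M_n^{\tau})^{t}\{\psi\}$ for various $s,t\le l$.

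Collecting all tails with their signs, the total residual organises itself into a $\Phi$-$\Psi^{\top}$ sandwich whose inner matrix encodes the full action of $L_{k,l}$ on $\varphi$ from one side and of $L_{k,l}^{\tau}$ on $\psi$ from the other. Substituting (\ref{prm}) then replaces those actions by $-c_l\varphi\Lambda^{l+1}$ and $-c_l\psi\tilde\Lambda^{l+1}$, which, after the $\Delta^{-1}$-dressings $\Phi=\varphi\Delta^{-1}$ and $\Psi=\psi\Delta^{-1,\top}$, reduce the residual to $c_l\Phi\bigl(C\Lambda^{l+1}-(\tilde\Lambda^{\top})^{l+1}C\bigr)D^{-1}\Psi^{\top}=c_l\Phi{\cal M}_{l+1}D^{-1}\Psi^{\top}$, the overall sign being fixed by the explicit minus sign in $W$. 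Lemma 2 then rewrites this residual as $c_l\sum_{s=0}^l\Phi[s]{\cal M}_1D^{-1}\Psi^{\top}[l-s]$, completing (\ref{Mop}). Formula (\ref{tmh}) follows at once: by (\ref{tM}) and the multiplicativity of conjugation, $W\tilde L_{k,l}W^{-1}=\hat L_{k,l}+c_l(WM_nW^{-1})^{l+1}=\hat L_{k,l}+c_l(\hat M_n)^{l+1}$.

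The main obstacle is the repackaging step: verifying term by term that the many integral tails produced by (\ref{Sydorenko:eq25})--(\ref{Sydorenko:eq27}) and by the Neumann-type expansion of $W^{-1}$ really reorganise into the single actions $L_{k,l}\{\varphi\}$ and $L_{k,l}^{\tau}\{\psi\}$. Mechanically this is the same telescoping that was used in (\ref{Frt})--(\ref{dsa}) to prove Theorem \ref{T1}, now run with $\varphi,\psi$ playing the role of ${\bf q},{\bf r}$; what triggers the collapse into the compact ${\cal M}_{l+1}$-form is precisely the joint vanishing $\tilde L_{k,l}\{\varphi\}=0$ and $\tilde L_{k,l}^{\tau}\{\psi\}=0$ on the eigenfunctions, after which Lemma 2 delivers the stated summation.
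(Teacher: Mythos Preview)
Your outline is correct and would eventually yield (\ref{Mop}), but it takes a noticeably heavier route than the paper. The paper does \emph{not} conjugate $B_k$ and each integral summand separately and then attempt the ``repackaging'' you flag as the main obstacle. Instead it removes that obstacle in one move: it stacks the iterates into block vectors $\tilde{\bf q}=({\bf q}[0],\ldots,{\bf q}[l])$, $\tilde{\bf r}=({\bf r}[l],\ldots,{\bf r}[0])$ and a block-diagonal $\tilde{\cal M}_0$, so that $L_{k,l}=B_k+c_l\tilde{\bf q}\tilde{\cal M}_0D^{-1}\tilde{\bf r}^{\top}$ has exactly the single-tail form to which Theorem~\ref{2009} already applies. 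Since (\ref{prm}) says $L_{k,l}\{\varphi\}=\varphi(-c_l\Lambda^{l+1})$ and $L_{k,l}^{\tau}\{\psi\}=\psi(-c_l\tilde\Lambda^{l+1})$, Theorem~\ref{2009} immediately gives $\hat L_{k,l}=\hat B_k+c_l\hat{\tilde{\bf q}}\tilde{\cal M}_0D^{-1}\hat{\tilde{\bf r}}^{\top}+\Phi{\cal M}_{l+1}D^{-1}\Psi^{\top}$, with no tail-collecting at all. The remaining work---unpacking $\hat{\tilde{\bf q}},\hat{\tilde{\bf r}}$ via the intertwining $W\{{\bf q}[i]\}=\hat{\bf q}[i]$, and expanding $\Phi{\cal M}_{l+1}D^{-1}\Psi^{\top}$ by Lemma~2---is the same in both approaches, as is the derivation of (\ref{tmh}).

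So the difference is purely organisational: you re-derive the content of Theorem~\ref{2009} in a more general setting by hand, whereas the paper reduces to that theorem by a block-vector relabelling. Your route is more self-contained but demands the delicate bookkeeping you yourself identify; the paper's route is shorter and makes clear that Theorem~\ref{M} is essentially Theorem~\ref{2009} plus Lemma~2.
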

\begin{proof}
We shall rewrite the operator $L_{k,l}$ (\ref{ex2+1}) in the form
\begin{equation}\label{Mn1ns}
L_{k,l}=\mathcal{J}_kD^k+\sum_{i=0}^{k-1}u_iD^i+c_l\tilde{{\bf
q}}\tilde{{\cal M}}_0D^{-1}\tilde{\bf r}^{\top},
\end{equation}
where $\tilde{{\cal M}}_0$ is an $m(l+1)\times m(l+1)$-
block-diagonal matrix with entries of ${\cal M}_0$ at the diagonal;
$\tilde{{\bf q}}:=({\bf q}[0],{\bf q}[1],\ldots,{\bf q}[l])$,
$\tilde{{\bf r}}:=({\bf r}[l],{\bf r}[l-1],\ldots,{\bf r}[0])$.
Using Theorem \ref{2009} we obtain that
\begin{equation}\label{fgre}
\hat{L}_{k,l}=\mathcal{J}_kD^k+\sum_{i=0}^{k-1}\hat{u}_iD^i+c_l\hat{\tilde{{\bf
q}}}\tilde{{\cal M}}_0D^{-1}\hat{\tilde{{\bf r}}}^{\top} +\Phi{\cal
M}_{l+1}D^{-1}\Psi^{\top},
 \end{equation}
where $\hat{\tilde{{\bf q}}}=W\{\tilde{{\bf q}}\}$,
$\hat{\tilde{{\bf q}}}=W^{-1,\tau}\{{\tilde{\bf r}}\}$. Using the
exact form of ${\tilde {\bf q}}$ and $\tilde{\bf r}$ we have
\begin{equation}
\hat{\tilde{{\bf q}}}=W\{\tilde{{{\bf q}}}\}=(W\{{\bf
q}[0]\},\ldots, W\{{\bf q}[l]\}),\, \hat{\tilde{{\bf
r}}}=W^{-1,\tau}\{\tilde{{{\bf r}}}\}=(W^{-1,\tau}\{{\bf
r}[l]\},\ldots, W^{-1,\tau}\{{\bf r}[0]\}).
\end{equation}
We observe that
\begin{equation}
W\{{\bf q}[i]\}=WL^i\{{\bf q}\}=WL^iW^{-1}\{W\{{\bf
q}\}\}=\hat{L}^i\{\hat{{\bf q}}\}=:\hat{{\bf q}}[i].
\end{equation}
It can be shown analogously that
 $W^{-1,\tau}\{{\bf
r}[i]\}=\hat{L}^{\tau,i}\{W^{-1,\tau}\{{\bf
r}\}\}=\hat{L}^{\tau,i}\{\hat{\bf r}\}=:\hat{\bf r}[i]$. Thus we
have:
\begin{equation}\label{sd}
 \hat{\tilde{{\bf
q}}}\tilde{{\cal M}}_0D^{-1}\hat{\tilde{{\bf
r}}}^{\top}=\sum_{j=0}^l\hat{{\bf q}}[j]{\cal M}_0D^{-1}\hat{{\bf
r}}^{\top}[l-j].
\end{equation}
For the last item in (\ref{fgre}) from Lemma 2 we have:
\begin{equation}\label{sdd}
\Phi{\cal M}_{l+1}D^{-1}\Psi^{\top}=\sum_{s=0}^{l}\Phi[s]{\cal
M}_{1}D^{-1}\Psi^{\top}[l-s].
\end{equation}
Using formulae (\ref{fgre}), (\ref{sd}), (\ref{sdd}) we obtain that
the operator $\hat{M}_{n,l}$ has form (\ref{Mop}). The exact form of
the operator $\hat{\tilde{M}}_{n,l}$ follows from formula
(\ref{Mop}) and Theorem \ref{2009}.
\end{proof}
From Theorem \ref{M} we obtain the following corollary.
\begin{corollary}\label{Corol}
Assume that functions $\varphi$ and $\psi$ satisfy problems
(\ref{pr}) and (\ref{prm}). Then the functions
$\Phi=W\{\varphi\}C^{-1}$ and $\Psi=W^{-1,\tau}\{\psi\}C^{\top,-1}$
(see formulae (\ref{DSM})) satisfy the equations:
\begin{equation}\label{meq}
\hat{\tilde{L}}_{k,l}\{\Phi\}=\hat{L}_{k,l}\{\Phi\}+c_l(\hat{M}_n)^{l+1}\{\Phi\}=0,
\,\,\hat{\tilde{L}}_{k,l}^{\tau}\{\Psi\}=\hat{L}^{\tau}_{k,l}\{\Psi\}+c_l(\hat{M}^{\tau}_n)^{l+1}\{\Psi\}=0,
\end{equation}
where the operators $\hat{L}_{k,l}$, $\hat{\tilde{L}}_{k,l}$ and
$\hat{M}_{n}$
 are defined by (\ref{Lop}), (\ref{Mop}) and
(\ref{tmh}).
\end{corollary}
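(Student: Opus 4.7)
The plan is to show that the two identities in (\ref{meq}) follow almost directly from the intertwining property of the binary Darboux transformation $W$ once we rewrite the hypotheses (\ref{prm}) in the compact form $\tilde{L}_{k,l}\{\varphi\}=0$ and $\tilde{L}_{k,l}^{\tau}\{\psi\}=0$. Since the paper has already fixed $\gamma=1$, the definition (\ref{tM}) gives $\tilde{L}_{k,l}=L_{k,l}+c_l(M_n)^{l+1}$, and the two scalar equations in (\ref{prm}) can be added to produce exactly $\tilde{L}_{k,l}\{\varphi\}=L_{k,l}\{\varphi\}+c_lM_n^{l+1}\{\varphi\}=0$, with the analogous identity for $\psi$ under formal transpose.

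First I would apply $W$ on the left to the equation $\tilde{L}_{k,l}\{\varphi\}=0$ and insert $W^{-1}W=I$ to obtain
\begin{equation}
0=W\tilde{L}_{k,l}\{\varphi\}=(W\tilde{L}_{k,l}W^{-1})\{W\{\varphi\}\}=\hat{\tilde{L}}_{k,l}\{W\{\varphi\}\}.
\end{equation}
Since $C$ is a constant matrix and $\hat{\tilde{L}}_{k,l}$ acts $x$-differentially with matrix coefficients on the left, we may multiply on the right by $C^{-1}$ to conclude $\hat{\tilde{L}}_{k,l}\{\Phi\}=\hat{\tilde{L}}_{k,l}\{W\{\varphi\}C^{-1}\}=0$. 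Finally, invoking formula (\ref{tmh}) of Theorem \ref{M}, which already identifies $\hat{\tilde{L}}_{k,l}=\hat{L}_{k,l}+c_l(\hat{M}_n)^{l+1}$, yields the first equality in (\ref{meq}).

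For the equation on $\Psi$ I would perform the mirror argument. Starting from $\tilde{L}_{k,l}^{\tau}\{\psi\}=0$, apply $W^{-1,\tau}$ on the left and insert $W^{\tau}W^{-1,\tau}=I$:
\begin{equation}
0=W^{-1,\tau}\tilde{L}_{k,l}^{\tau}\{\psi\}=\bigl(W^{-1,\tau}\tilde{L}_{k,l}^{\tau}W^{\tau}\bigr)\{W^{-1,\tau}\{\psi\}\}=(\hat{\tilde{L}}_{k,l})^{\tau}\{W^{-1,\tau}\{\psi\}\},
\end{equation}
where the middle step uses the standard identity $(W\tilde{L}_{k,l}W^{-1})^{\tau}=W^{-1,\tau}\tilde{L}_{k,l}^{\tau}W^{\tau}$ for formal transposes of compositions. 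Multiplying on the right by the constant matrix $C^{\top,-1}$ gives $\hat{\tilde{L}}_{k,l}^{\tau}\{\Psi\}=0$, and splitting via (\ref{tmh}) produces the stated decomposition.

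The computation is essentially a one-line intertwining argument on each side, so there is no real obstacle provided one is careful about two bookkeeping points: (i) verifying that $\tilde{L}_{k,l}\{\varphi\}=0$ is really the content of (\ref{prm}) after setting $\gamma=1$, and (ii) justifying that the right-multiplication by the constant matrix $C^{-1}$ (respectively $C^{\top,-1}$) commutes with the action of $\hat{\tilde{L}}_{k,l}$ (respectively its transpose). Both points are immediate because the coefficients of the pseudodifferential operators act by left multiplication and $C$ is $x$- and $t_n$-independent, so the corollary reduces to the already-proved intertwining formula (\ref{tmh}) of Theorem \ref{M}.
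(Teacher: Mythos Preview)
Your argument is correct and is exactly the intertwining argument the paper has in mind when it says the corollary follows from Theorem~\ref{M}; the only minor imprecision is that $\hat{\tilde{L}}_{k,l}$ is pseudodifferential rather than purely $x$-differential, but since the integral part $D^{-1}$ also commutes with right multiplication by the constant matrix $C^{-1}$, your bookkeeping point (ii) still goes through unchanged.
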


As an example we will consider dressing methods for equations
connected with the operators $\tilde{L}_{2,1}$, $M_0$. Assume that
$\varphi$ and $\psi$ are $N\times K$-matrix functions that satisfy
the equations
\begin{equation}\label{Leq1}
 M_0\{\varphi\}=\varphi\Lambda,\,\,M_0^{\tau}\{\psi\}=\psi\tilde{\Lambda},\,\,M_0:=\alpha\partial_t.\\
 \end{equation}
 By Theorem \ref{2009} we obtain that the dressed operator $\hat{M}_0$
 via BDT $W$ (\ref{W}) has the form
 \begin{equation}\label{hL0}
 \hat{M}_0=WM_0W^{-1}=\alpha\partial_t+\Phi{\cal
 M}_1D^{-1}\Psi^{\top}.
 \end{equation}
Assume that $N\times K$-matrix functions $\varphi$ and $\psi$ in
addition to equations (\ref{Leq1}) also satisfy the equations
\begin{equation}\label{M2e}
 {L}_{2,1}\{\varphi\}=-c_1\varphi\Lambda^2=-c_1(M_0)^2\{\varphi\},\,\,{L}_{2,1}^{\tau}\{\psi\}=-c_1\psi{\tilde\Lambda}^2=-c_1(M_0^{\tau})^2\{\psi\},\,
 {{L}}_{2,1}:=D^2.
\end{equation}
By Theorem \ref{M} we obtain that the transformed operator
$\hat{L}_{2,1}$ has the form
\begin{equation}
\hat{{{L}}}_{2,1}=W{L}_{2,1}W^{-1}=D^2+\hat{v}_0+\hat{M}_0\{\Phi\}{\cal
 M}_1D^{-1}\Psi^{\top}+\Phi{\cal
 M}_1D^{-1}((\hat{M}_0^{\tau})\{\Psi\})^{\top}.
\end{equation}
By direct calculations it can be obtained that
$\hat{v}_0=2(\varphi\Delta^{-1}\psi^{\top})_x$,
$\Delta=C+D^{-1}\{\psi^{\top}\varphi\}$. It can be easily checked
that
\begin{equation}\label{a}
\begin{array}{l}
\alpha(\varphi\Delta^{-1}\psi^{\top})_t=\alpha\varphi_t\Delta^{-1}\psi^{\top}-\alpha\varphi\Delta^{-1}D^{-1}\{\psi^{\top}\varphi\}_t\Delta^{-1}\psi^{\top}+\alpha\varphi\Delta^{-1}\psi^{\top}_t=\\=
\varphi\Delta^{-1}(C\Lambda+\alpha
D^{-1}\{\psi^{\top}\varphi_t\})\Delta^{-1}\psi^{\top}-\alpha\varphi\Delta^{-1}D^{-1}\{\psi^{\top}\varphi\}_t\Delta^{-1}\psi^{\top}+\\+\varphi\Delta^{-1}
(-\tilde{\Lambda}^{\top}C+\alpha
D^{-1}\{\psi^{\top}_t\varphi\})\Delta^{-1}\psi^{\top}=\Phi {\cal
M}_1\Psi^{\top}.
\end{array}
\end{equation}
From the latter formula we obtain that
\begin{equation}\label{adS}
\alpha\hat{v}_{0t}=2\alpha(\varphi\Delta^{-1}\psi^{\top})_{xt}=2(\Phi{\cal
M}_1\Psi^{\top})_x.
\end{equation}
From Corollary \ref{Corol} we see that the functions
$\Phi=\varphi\Delta^{-1}$ and $\Psi$$=\psi\Delta^{\top,-1}$ where
$\Delta=C+D^{-1}\{\psi^{\top}\varphi\}$ (see formulae (\ref{DSM}))
 satisfy equations (\ref{meq}). After the change ${\bf q}:=\Phi$,
${\bf r}:=\Psi$, ${\cal M}_0:=-{\cal M}_1$, $v_0:=\hat{v}_0$ from
formulae (\ref{meq}) and (\ref{adS}) we obtain that $N\times
K$-matrix functions ${\bf q}$, ${\bf r}$, an $N\times N$-matrix
function $v_0$, a $K\times K$-matrix function $S=2\alpha
(\Delta^{-1})_t$ and a $K\times K$-matrix ${\cal M}_0$ satisfy
equations (\ref{DSnr}). It can be checked that in the case of
additional reductions in formulae (\ref{Leq1})-(\ref{M2e}):
$\alpha\in{\mathbb{R}}$, $c_1\in{\mathbb{R}}$,
$\tilde{\Lambda}=-\bar{\Lambda}$, $\psi=\bar{\varphi}$ and $C=C^*$
in gauge transformation operator $W$ (\ref{W}) it can be checked by
direct calculations that the functions ${\bf q}:=\Phi$,
$S_1=2\alpha(\Delta^{-1})_t$ and
$v_0=\hat{v}_0=2(\varphi\Delta^{-1}\varphi^*)_x$ satisfy matrix DS
system (\ref{DSnr})  with ${\cal M}_0=-{\cal M}_1$.

From the previous considerations we obtain that in the scalar case
($N=1$, $m=1$), $\mu:={\cal M}_0=-{\cal
M}_1=-C(\Lambda+\bar{\Lambda})$ under condition $\alpha=1$ functions
\begin{equation}\label{1sol}
u=q=\frac{\exp({\theta})}{\Delta},\,\,
S=-2\frac{\rm{Re}({\Lambda})\exp(2\rm{Re}(\theta))}{\rm{Re}(i\sqrt{c_1}\Lambda)\Delta^2},\,\,
v_0=-2\mu\frac{\rm{Re}(i\sqrt{c_1}\Lambda)\exp(2\rm{Re}(\theta))}{\rm{Re}({\Lambda})\
\Delta^2},\,
\end{equation}
where
$\Delta=-\frac{\mu}{2\rm{Re}({\Lambda})}+\frac{1}{2\rm{Re}(i\sqrt{c_1}\Lambda)}
\exp{(2\rm{Re}(\theta))}$ and $\theta=i\sqrt{c_1}\Lambda
x+
{\Lambda}t$, satisfy the scalar DS system (\ref{DSuscal}) (see also
(\ref{DS2})):
\begin{equation}\label{1solx}
\begin{array}{c}
 {u}_{xx}+
 c_1{u}_{tt}+v_0{u}+\mu c_1S{u}=0,\\
v_{0t}=-2\mu|{u}|^2_x,\,\,S_{x}=-2|u|^2_t.
\end{array}
\end{equation}
Functions $u$ and $S_1=\mu^{-1}v_0+c_1S$ are therefore solutions of
differential consequence (\ref{1solx}):

\begin{equation}\label{DS22}
\begin{array}{c}
 u_{xx}+c_1u_{tt}+\mu S_{1}u=0,\,\,\,
 S_{1,xt}=-2(|u|^2)_{xx}-2c_1(|u|^2)_{tt},
\end{array}
\end{equation}
Consider special cases of (\ref{DS22}) and its solutions:
\begin{enumerate}
\item $c_1=1$.
\begin{enumerate}
\item $\mu=1$. In this case functions $u$ and
$S_1=\mu^{-1}v_0+c_1S$, where $v_0$ and $S_1$ are defined by
(\ref{1sol}) represent regular solutions of (\ref{DS22}) in case
$(\rm{Re}(\Lambda))(\rm{Im}(\Lambda))>0$ (in case
$(\rm{Re}(\Lambda))(\rm{Im}(\Lambda))<0$ $u$ and $S_1$ are singular)
\item $\mu=-1$. In this case functions $u$ and
$S_1=\mu^{-1}v_0+c_1S$, where $v_0$ and $S_1$ are defined by
(\ref{1sol}) represent regular solutions of (\ref{DS22}) in case
$(\rm{Re}(\Lambda))(\rm{Im}(\Lambda))<0$ (in case
$(\rm{Re}(\Lambda))(\rm{Im}(\Lambda))>0$ $u$ and $S_1$ are singular)
\end{enumerate}
\item $c_1=-1$.
\begin{enumerate}
\item $\mu=1$. In this case functions $u$ and
$S_1=\mu^{-1}v_0+c_1S$, where $v_0$ and $S_1$ are given by
(\ref{1sol}) represent regular solutions of (\ref{DS22}).
\item $\mu=-1$. In this case functions $u$ and
$S_1=\mu^{-1}v_0+c_1S$, where $v_0$ and $S_1$ are defined by
(\ref{1sol}) represent singular solutions of (\ref{DS22}).
\end{enumerate}
\end{enumerate}

 The construction of wider
classes of solutions (e.g., soliton solutions) for vector and matrix
nonlinear systems from (1+1)-BDk-cKP hierarchy will take too much
space in this paper. Corresponding ideas can be found in
\cite{BS1,PHD,SCD}.


\section{Conclusions}
In this paper we introduced a new (1+1)-BDk-cKP hierarchy
(\ref{ex2+1}) that generalizes matrix k-constrained KP hierarchy
given by (\ref{eq1}) and (\ref{eq2}) that was investigated in
\cite{SS,KSS,Chenga1,CY,Chenga2}. We shall point that an important
case of hierarchy (\ref{ex2+1}) ($\gamma=0$) is not precisely
investigated in this paper. In particular, dressing methods for this
case still have to be elaborated. As an example let us consider the
case $\gamma=0$, $k=1$, $s=1$, $n=2$ of hierarchy (\ref{ex2+1}).
 Then
operators ${P}_{k,s}$ and $M_n$ (\ref{sex2+1}) take the form:
\begin{equation}\nonumber
\begin{array}{l}
{P}_{1,1}=D+c_1\left(\alpha_2{\bf q}_{t_2}{\cal M}_0D^{-1}{\bf
r}^{\top}-\alpha_2{\bf q}{\cal M}_0D^{-1}{\bf r}_{t_2}^{\top}-{\bf
q}_{xx}{\cal M}_0D^{-1}{\bf r}^{\top}\right.-\\\left.-{\bf q}{\cal
M}_0D^{-1}{\bf r}^{\top}_{xx}-u{\bf q}{\cal M}_0D^{-1}{\bf
r}^{\top}-{\bf q}{\cal M}_0D^{-1}{\bf r}^{\top}u\right)+c_0{\bf
q}{\cal M}_0D^{-1}{\bf r}^{\top},\\M_2=\alpha_2\partial_{t_2}-D^2-u.
\end{array}
\end{equation}
According to Corollary 2 operator equation $[{P}_{1,1},M_2]=0$ is
equivalent to the system:
\begin{equation}\label{PM}
[P_{1,1},M_2]_{\geq0}=0,\, c_1M_2^2\{{\bf q}\}+c_0M_2\{{\bf
q}\}=0,\,\,c_1(M_2^{\tau})^2\{{\bf r}\}+c_0M_2^{\tau}\{{\bf r}\}=0.
\end{equation}
System (\ref{PM}) in the vector case ($N=1$) the can be rewritten in
the following form:
\begin{eqnarray}\nonumber
&&c_1(\alpha_2^2{\bf q}_{t_2t_2}-2\alpha_2{\bf q}_{xxt_2}+{\bf
q}_{xxxx}-\alpha_2(u{\bf q})_{t_2}+(u{\bf q})_{xx}-\alpha_2u{\bf
q}_{t_2}+u{\bf q}_{xx}+u^2{\bf q})+
\\&&+c_0(\alpha_2{\bf q}_{t_2}-{\bf q}_{xx}-u{\bf q})=0,\nonumber\\
&&c_1(\alpha_2^2{\bf r}_{t_2t_2}+2\alpha_2{\bf r}_{xxt_2}+{\bf
r}_{xxxx}+\alpha_2(u{\bf r})_{t_2}+(u{\bf r})_{xx}+\alpha_2u{\bf
r}_{t_2}+u{\bf
r}_{xx}+u^2{\bf r})+\\&&c_0(-\alpha_2{\bf r}_{t_2}-{\bf r}_{xx}-u{\bf r})=0,\label{TS}\\
&&u=2\left(\frac{c_1(\alpha_2{\bf q}_{t_2}{\cal M}_0{\bf
r}^{\top}-\alpha_2{\bf q}{\cal M}_0{\bf r}^{\top}_{t_2}-{\bf
q}_{xx}{\cal M}_0{\bf r}^{\top}-{\bf q}{\cal M}_0{\bf
r}^{\top}_{xx})+c_0{\bf q}{\cal M}_0{\bf r}^{\top}}{1+4c_1{\bf
q}{\cal M}_0{\bf r}^{\top}}\right)\nonumber
\end{eqnarray}
This system is a generalization of the vector nonlinear
Schr\"{o}dinger system (NLS) in $l$-direction ($l=1$). NLS can be
obtained from the latter system in particular case ($c_1=0$) after
additional Hermitian conjugation reduction ${\bf q}=\bar{\bf r}$,
$\alpha_2\in i{\mathbb{R}}$. Analogous generalizations in
$l$-direction can be made for Yajima-Oikawa hierarchy ($k=2$, $s=0$,
$\gamma=0$ in the hierarchy (\ref{sex2+1})) and Melnikov hierarchy
($k=3$, $s=0$, $\gamma=0$ in the hierarchy (\ref{sex2+1})).
Investigations in this direction will be made in another paper.

\end{document}